\newcommand\redout{\bgroup\markoverwith
{\textcolor{red}{\rule[.5ex]{4pt}{1.4pt}}}\ULon}
\newtheorem{assumption}{Assumption}
\newtheorem{theorem}{Theorem}
\newtheorem{lemma}{Lemma}
\newtheorem{proposition}{Proposition}
\def\BFPp{\text{BFP}^+}
\def\BFPn{\text{BFP}^-}
\def\Zp{{Z^+}}
\def\Zn{{Z^-}}
\def\sg{{s_g}}
\def\sz{{s_z}}
\def\Var{\text{Var}}
\def\Cov{\text{Cov}}
\def\rhogz{{\rho_{G,Z}}}
\def\EN{h}
\def\ENtwo{v}
\def\ENplus{{h_+(p)}}
\def\ENplusprime{h_+'(0)}
\def\ENplusprimeprime{h_+''(0)}
\def\ENplusprimefull{\frac{d}{dp}\ENplus|_{p=0}}
\def\ENplusprimeprimefull{\frac{d^2}{dp^2}\ENplus|_{p=0}}
\def\alphap{\alpha^+}
\def\betamax{\beta_\text{max}}
\def\hmin{h_\text{min}}
\def\ch#1{\frac{\alpha'(#1)(#1-1)}{\alpha(#1)}}
\def\R{ {\mathbb{R} }}
\def\Fig#1{Fig.~\ref{#1}}
\def\FigR#1 #2 {Fig.~\ref{#1}-\ref{#2}}
\def\eq#1{eq.~\eqref{#1}}
\begin{document}

\title{Inferring average generation via division-linked labeling}

\author{
Tom S. Weber\thanks{Hamilton Institute, Maynooth University, Ireland},
Le\"ila Peri\'e\thanks{Division of Immunology, Netherlands Cancer
Institute \& Theoretical Biology and Bioinformatics, Utrecht
University, the Netherlands. Present address: Curie Institute, CNRS
UMR 168, Paris, France.}
and
Ken R. Duffy\thanks{Hamilton Institute, Maynooth University, Ireland.
E-mail: Corresponding ken.duffy@nuim.ie}
}

\date{$3^{{\rm rd}}$ June 2015}

\maketitle

\begin{abstract}
For proliferating cells subject to both division and death, how can
one estimate the average generation number of the living population
without continuous observation or a division-diluting dye? In this
paper we provide a method for cell systems such that at each division
there is an unlikely, heritable one-way label change that has no
impact other than to serve as a distinguishing marker. If the
probability of label change per cell generation can be determined
and the proportion of labeled cells at a given time point can be
measured, we establish that the average generation number of living
cells can be estimated. Crucially, the estimator does not depend
on knowledge of the statistics of cell cycle, death rates or total
cell numbers. We validate the estimator and illustrate its features
through comparison with published data and physiologically parameterized
stochastic simulations, using it to suggest new experimental designs.
\end{abstract}

\section{Introduction}
\label{sec:intro}

Given a proliferating population of cells starting from one or more
progenitors, a natural quantity of interest in cell biology is the
average number of divisions per cell since the initial progenitors,
i.e. the average generation of presently living cells. The average
generation is related to the population's turn-over rate and can
potentially be used to quantify dynamics and aging of the immune
system
\cite{Vaziri1993Loss,Kaszubowska2008Telomere,deboer2013,marchingo2014}, to
better understand the evolution and risk of cancer
\cite{Frank2003Patterns,Merlo2006Cancer,tomasetti15}, and to rank
cell types in hierarchies of complex differentiation programs
\cite{hills2009,zhang2013}.

Estimating average generation is a simple matter if cell lifetimes
are all equal and the division time is known or, alternatively, if
total cell counts can be measured and there is no cell death. If,
however, lifetimes are heterogeneous, the population is subject to
death as well as division or total cell counts are not available,
the issue is more involved. In these settings, several experimental
techniques can be employed to estimate average generation, including
time lapse microscopy, division diluting fluorescent dyes, and
inference from somatic mutations and telomere length.

The most unambiguous measurement technique is \textit{in vitro} time
lapse microscopy as it affords nearly direct determination of lineage
trees and cell generation. Time lapse microscopy has been used to
study many cell systems including bacteria, lymphocytes, embryonic
development, gut development, to name but a few, e.g.
\cite{Powell55,Smith73,Sulston83,Hawkins09,Snippert10,markham10,Gomes11,Duffy12,Giurumescu2012Quantitative,richards2013,dowling2014}.
Even it, however, has limitations as filming is not continuous,
cells can leave the field of vision or form three dimensional
structures that inhibit tracking. The complexity of image segmentation
increases with cell numbers, and so time lapse microscopy has proved
challenging if more than approximately $10$ generations are to be
followed.

Another popular experimental approach, particularly in immunology,
is to stain starting cells with a fluorescent dye such as
Carboxyfluorescein Succinimidyl Ester \cite{lyons94,lyons00,hawkins07},
CellTrace$^\text{TM}$ Violet or Cell Proliferation Dye eFluor 670
\cite{quah12}. With each division, cells inherit approximately
half of their parent's dye and so fluoresce with half their intensity.
A cell's generation can thus be determined by its luminous intensity
via flow cytometry. This approach
is used both \textit{in vitro} and \textit{in vivo}, and allows the
experimenter to start with a large number of progenitors without
difficulty. It enables 6-10 generations to be followed before dye
dilution is such that the signal-to-noise ratio is too high for
a cell's generation to be reliably determined.

Determining generation \textit{in vivo} remains challenging as often
it cannot be achieved by direct observation or cell stain methods.
Estimating replicative history, cell depth and lineage trees has
been proposed by measurement of average telomere length
\cite{harley1990,allsopp1992,vaziri1994,weinrich1997,rufer1999,hills2009}
or by the number of somatic mutations, which are introduced
during DNA duplication
\cite{shibata96,tsao00,shibata06,Wasserstrom08,reizel11,carlson2012}.
While methods in this direction rely on inference rather than direct
observation, they offer the possibility of tracing more than 10
generations in a wide range of species, including humans.

In the present paper, we provide a novel
average generation estimator that is
designed with the \textit{in vivo} setting in mind. The estimator is
suitable for systems where cells undergo an unlikely, heritable
division-linked label change with a determinable probability, where
the label serves only as a marker and does not impact on cell
dynamics. Chief amongst the estimator's desirable properties are
that: (1) it requires no information on cell lifetime distributions;
(2) the population can be subject to death as well as division; and
(3) only a proportion of label-positive to total cells needs to be
measured. The present article introduces the estimator, analytically
establishes its fundamental properties, validates its applicability
by comparison with simulation and comparison with published data,
proposes an experimental realization, and demonstrates its utility.

In Section \ref{sec:estover} we describe the estimator and explain
how it can be used. The estimator appropriateness is a consequence
of theorems that are presented in detail in Section \ref{sec:math},
with a heuristic explanation and overview of them appearing first
in Section \ref{sec:mathsum}. 

In Section \ref{sec:val_sim}, Validation Using Simulated Data, we
use Monte Carlo simulations to assess the estimator's performance
for physiological parameterizations.  In Section \ref{sec:val_data},
Validation Using Published Data, we use several sources of publicly
available data to illustrate the estimator's applicability. We avail
of complete lineage tree data for the development of \textit{C.
elegans}, as determined from time-lapse microscopy \cite{richards2013},
stochastically labeling it. Applying the estimator results in
accurate inference of the average generation in comparison to the
directly observed quantity.  We also take data from two distinct
experimental studies, one on human colorectal cancer cells
\cite{Gasche2003} and one on murine embryonic fibroblasts
\cite{Kozar2013}, that employ micro-satellite mutation fluorescent
reporter systems. Micro-satellite mutation is an unlikely division
linked change and the fluorescence of cells in these systems serves
as a label suitable for average generation estimation. The results
show consistency between average generation estimates, measured
quantities and values reported in the literature.

As an illustrative example, in Section \ref{sec:expdes}, Experiment
Design, we propose a genetic construct, based on existing pieces,
to facilitate average generation inference. We describe how the
probability of label-loss could be measured and how the method could
be validated. In Section \ref{sec:disc}, Discussion, we conclude
with experimental designs where the method would prove biologically
informative.

\section{Estimator Overview}
\label{sec:estover}

We consider a system where cells are subject to a division-linked,
heritable label change that serves as a measurable distinguishing
marker, but does not influence population dynamics. The method is
appropriate regardless of whether initial cells are label-negative
and can gain the label, which is then inherited by their offspring,
or are label-positive and can lose the label, with their offspring
not regaining the label. For a consistent description, we phrase
the paper in terms of the latter, but the results all hold {\it
mutatis mutandis}.

During each cell's lifetime, assume that a label-positive cell
becomes label-negative with a known, small likelihood, $p$.  Let
$Z(t)$ be the total live cell count at time $t$ and $\Zp(t)$ be the
live cell count of label-positive cells. Assume that the initial
cells at $t=0$ are all label-positive and that at some time
$t$ the fraction of label positive cells to total cells, $\Zp(t)/Z(t)$,
can be measured.  With $G(t)$ denoting the sum of the generations
of all cells living at time $t$ and with $G(0)$ defined to be $0$, then
given there are label-positive cells in the system at $t$, i.e. $\Zp(t)>0$,
we establish that the average generation of cells alive at time $t$
satisfies the following relationship 
\begin{align} 
\label{eq:approx}
\frac{G(t)}{Z(t)}
	\approx -\frac1p\log\left(\frac{\Zp(t)}{Z(t)}\right).
\end{align} 
That is, the average generation of the population can be estimated
directly from the proportion of label-positive cells if the delabeling
probability is known.

Perhaps unexpectedly
the formula \eqref{eq:approx}
does not depend on several difficult-to-measure factors such as
cell-lifetime distributions and total cell counts. Moreover the
right hand side of \eqref{eq:approx} requires only a proportional
measurement of label-positive cells, which can be determined
from a sample, and the relationship holds even though the population
could be subject to death as well as division.

For the validity of \eqref{eq:approx}, we have assumed that at $t=0$
all cells are label-positive. As a result $\Zp(0)=Z(0)$ and so both
the right and left hand side of \eqref{eq:approx} are $0$ at $t=0$,
agreeing irrespective of the initial starting number. If, however,
not all cells are initially label-positive, the estimator and the
average generation would not agree. This can be rectified
if measurements of the proportion of positively labeled cells are
available at two time-points, $t_2>t_1$.  Then, irrespective of the
initial composition of label-positive and label-negative cells, so
long as $\Zp(t_2)>0$,
\begin{align}
\label{eq:two_tps}
\frac{G(t)}{Z(t)} 
	\approx \left(\frac{t}{t_2-t_1}\right) \left(-\frac1p 
	\log\left(\frac{\Zp(t_2)Z(t_1)}{\Zp(t_1)Z(t_2)}\right)\right).
\end{align}

This two-measurement estimator has an additional advantage when
initial cell numbers are small. If the culture is started, as an
extreme example, with a single progenitor, then the right hand side
of \eqref{eq:approx} can be subject to stochastic fluctuations at
shorter time-scales (see the Monte Carlo simulations in Section
\ref{sec:val_sim}). As established rigorously in Section \ref{sec:math},
with $t_1>0$, the two-sample estimator in \eqref{eq:two_tps} is
more accurate than \eqref{eq:approx} as it removes the influence
of the timing of early cell events on the estimate.

\section{Explanation of the Estimator's Origin}
\label{sec:mathsum}

The estimators \eqref{eq:approx} and \eqref{eq:two_tps} have
non-obvious forms. Utilising expansion properties of cumulant
generating functions, results in the following section show that
the relationships hold, in expectation, for an arbitrary familial
relationship. This includes, in particular, estimation of average
generation for heterogeneous cell populations with distinct,
potentially interdependent, proliferation characteristics.

In the absence of a generative model of family trees, however, that
derivation cannot provide information about the development in time
of the average generation of a family of cells. Nor can it be used
to determine time-dependent properties of the estimators on a
developing population.  For a more detailed analysis of sample-path,
multi-progenitor and time-dependent properties, a general mathematical
framework for capturing the stochastic dynamics of a proliferating
cell system subject to division and death, as well as heterogeneous
cell life times, is that of age dependent branching processes
\cite{Harris63,Kimmel02} where cells have random lifetimes as well
as random numbers of offspring. Since the seminal work of Bellman
and Harris \cite{Bellman52}, it has been known that if cells are
more likely to divide then die, given the cell population does not
die out, the number of cells living at time $t$ grows exponentially
in time, $Z(t) \sim \exp(\alpha t)$, at a rate, $\alpha$, dubbed
the Malthus parameter, that depends on the lifetime and offspring
number distributions of cells. This result is known to be robust,
for example, to sibling dependencies \cite{crump69,Olofsson96,duffy09}.

In a cell system experiencing heritable one-way label-changes,
label-positive cells can become label negative-cells, but the reverse
is not possible. Thus the number of label-positive cells at time
$t$ also satisfies the branching process result, $\Zp(t) \sim
\exp(\alphap t)$, but with a Malthus parameter that is smaller than
that for the total label-independent population of cells,
$\alphap<\alpha$. The difference $\alpha-\alphap$ depends on the
likelihood, $p$, that a label-positive cell loses its label.

Fewer results have been established regarding the total generation
of a cell population, the sum of the generations of all cells living
at time $t$, $G(t)$, for which new theorems can be found in Section
\ref{sec:math}. For the label-independent population, in substantial
generality we prove that $G(t) \sim t\exp(\alpha t)$. That is, the
total generation increases faster than the total population size
by a factor of $t$.  Recalling that $Z(t)\sim \exp(\alpha t)$, for
a general age-dependent branching process, the average generation
of the population grows linearly in time $G(t)/Z(t)\sim g t$ for
some $g>0$. In the following section we give a deterministic result
to provide non-probabilistic intuition for why this is so before
considering the stochastic system.

The clinching result quantitatively relates the phenomena of
stochastic delabeling to generational expansion. So long as
label-positive cells remain
\begin{align*}
\log\left(\frac{\Zp(t)}{Z(t)}\right)
\stackrel{t \text{ large}}{\sim} \log(\exp((\alphap-\alpha)t))
= (\alphap-\alpha)t \stackrel{p\text{ small}}{\sim} - pgt,
\end{align*}
so that for $p$ small
\begin{align*}
\frac{G(t)}{Z(t)} \sim -\frac1p \log\left(\frac{\Zp(t)}{Z(t)}\right).
\end{align*}
This identification holds for any age dependent branching process
and does not depend upon the details of life-time distributions or
the possibility of cell death, so long as populations do not go
extinct.

The estimators are related through stochastic quantities and are
subject to sample-path fluctuations, particularly at early time-points.
Establishing that estimates, as a function of time on individual
sample paths, converge to the true average generation underlies the
development of the two time-point estimator in equation \eqref{eq:two_tps}.
That methodology circumvents this issue of small number variability
by eliminating the early fluctuations on a path-by-path basis. As
a supporting result, we also prove that the path-to-path variability
of estimates decreases inversely proportionally to the number of
progenitors, supporting the precision of estimates for experimental
systems that are initiated with multiple cells, which is typically
the case.

\section{Formal Results}
\label{sec:math}

\subsection{In-expectation derivation}

We begin by deriving a version of \eq{eq:approx} based on averages
over realisations of the delabeling process. This derivation has
the advantage that it requires no assumptions regarding the family
tree and so holds in complete generality, but it is not informative
with regards individual realisations. Sample-path, time-dependent
properties of the estimators arise as a consequence of involved
theorems in the context of the most well established generative
model of family trees, age dependent branching processes, which
follow this derivation.

At some time $t$, consider a collection of $Z(t)$ cells, whose
familial relationship need not be known. The generation of a cell
is defined to be the number of ancestors it possesses, with initial
cells being 
defined to
be in generation $0$. Denote each of the
generations of the $Z(t)$ cells by $g_1(t),\ldots,g_{Z(t)}(t)$.
of the generations of all living cells being denoted by
\begin{align*}
G(t)= \sum_{i=1}^{Z(t)} g_i(t),
\end{align*}
knowing $p$ we wish to estimate the average generation of presently
living cells, i.e.  $G(t)/Z(t)$, by observation of the proportion
$\Zp(t)/Z(t)$.

If initial cells are label positive and with each division the label
is lost independently with probability $p$, the probability cell
$i$ is still label positive in generation $g_i(t)$ is $(1-p)^{g_i(t)}$.
With $\Zp(t)$ denoting the number of label positive cells, by
linearity of expectation the average proportion of label-positive
cells in
the population is
\begin{align*}
        \frac{E(\Zp(t))}{Z(t)} 
        = \frac{1}{Z(t)} \sum_{i=1}^{Z(t)} (1-p)^{g_i(t)}.
\end{align*}
Identifying $\theta = \log(1-p)$, this can be re-written in
a form that identifies it with a moment generating function,
e.g. \cite{feller68},
\begin{align*}
        E\left(e^{\theta \Gamma}\right)=
        \frac{1}{Z(t)} \sum_{i=1}^{Z(t)} \exp\left(\theta g_i(t)\right),
\end{align*}
where $\Gamma$ is uniformly selected in $\{g_1(t),\ldots,g_{Z(t)}(t)\}$.
As both the moment generating function and the cumulant generating
function, $\log E(e^{\theta \Gamma})$, are real analytic, we can
take a Taylor expansion of the latter around the origin, giving
\begin{align*}
         \log\left(E(e^{\theta \Gamma})\right)
                = 0 + \theta E(\Gamma) + O(\theta^2),
\end{align*}
so that
\begin{align*}
\lim_{\theta\to 0} \frac{1}{\theta}
         \log\left(E(e^{\theta \Gamma})\right)
                =  E(\Gamma). 
\end{align*}
Taking $\theta\to 0$ is equivalent to taking $p\to 0$. Thus 
noting that $\lim_{p\to 0}p/\log(1-p)=-1$, we obtain 
\begin{align*}
\lim_{p\to 0} \frac{-1}{p}
        \log\left(\frac{E(\Zp(t))}{Z(t)}\right)
= \frac{G(t)}{Z(t)},
\end{align*}
which is formula \eqref{eq:approx}, albeit with an expectation over
stochastic delabelings. While this result is not as strong as others
we shall prove, it illustrates both how the unusual formulation
arises and that, averaged over delabeling processes, the relationship
holds for arbitrary family tree structure. 

\subsection{Sample path properties}

To determine sample path properties of the generation counting
estimator, we must establish two quantitative relationships in a
general age-dependent branching process: (1) relating the growth
rates of labeled and unlabeled populations when the probability of
delabeling per cell division is small; and (2) for the relationship
between the number of cells alive and the sum of the generations
of all living cells.

For the quantitative relationship between the growth-rates of the
labeled and unlabeled populations when the delabeling probability
per cell division is small and label-state does not impact on
population dynamics, we leverage well-know single type results
\cite{jagers1969,Athreya76}.
For the relationship
between the number of cells alive and the sum of the generations
of all living cells, there are some
results indirectly available from \cite{Samuels71}, but to identify
more properties of our estimator we find it beneficial to provide
an analysis of the joint probability generating function of number
of cells alive and their total generation number.

Our initial mathematical study treats systems that start with a
single initiating progenitor. From these results, systems that start
with several statistically equivalent progenitors with indistinguishable
progeny can be deduced. In addition, we provide results on the
behavior of the estimator for a system with numerous progenitors,
establishing that that variances behave inversely proportionally
to the number of progenitors.

\subsection{Smoothness of the Malthus parameter}

We begin by considering a standard age dependent branching process,
e.g. \cite{Harris63,Kimmel02}, with the usual assumptions on the
distribution of a lifetime, a non-negative random variable $\tau$,
and the number of offspring, $N$ a non-negative integer valued
random variable, at the end of a lifetime. In the circumstances of
interest to us, proliferating cell populations, $N$ will take values
in $\{0,2\}$ indicating that cells die or divide into two at the
end of their lives, but we will prove the results in greater
generality.

\begin{assumption}
\label{ass:1}
The lifetime distribution, $P(\tau\leq t)$ for $t\in[0,\infty)$,
is non-lattice and satisfies $P(\tau\leq0^+)=0$. The probability
generating function of the number of offspring, $\rho_N(s)=E(s^N)$
for $s\in\R$, is finite in a neighborhood of $1$. We denote its
expected value by
$\EN=E(N)=\frac{d}{ds}\rho_N(s)|_{s=1}$.
\end{assumption}

A key quantity in the study of age-dependent branching processes
is the Malthus parameter, $\alpha(\EN)$, which is defined to be the
solution of the following equation, should the solution exist,
\begin{align}
\label{eq:malthus}
\EN E(e^{-\alpha(\EN)\tau})=1.
\end{align}
If $\alpha(\EN)$ exists, it is then the, possibly negative,
asymptotic exponential growth rate of the expected population size.
The dependence of the Malthus parameter, $\alpha(\EN)$, on the
expected number of offspring of a cell, $\EN$, is not usually made
explicit, but will prove essential for us as we shall be interested
in relating the growth rate of the label-positive population and
of the total population, which will differ. For that purpose, we have
the following result on the range of mean offspring, $h$, for which
the Malthus parameter exists and its smoothness as a function of
$h$. We expect that this Proposition is known, but cannot find a
reference in the literature and so present a proof here.

\begin{proposition}
\label{prop:malder}
Define
\begin{align*}
\betamax :=\sup\left\{\beta:E(e^{\beta \tau}) <\infty\right\}
\text{  and  }
\hmin  := \inf\{\EN: \EN E(e^{\betamax\tau}) \geq 1\},
\end{align*}
where $\hmin\leq 1$,
then there exists a real analytic function 
$\alpha:(\hmin,\infty)\mapsto\R$, the Malthus parameter, such that
\begin{align*}
\EN E(e^{-\alpha(\EN)\tau})=1 \text{  for  } \EN\in(\hmin,\infty).
\end{align*}
In particular, for $h>\hmin$, the first two derivatives of $\alpha(\EN)$
as a function of the average number of offspring $h$ are
\begin{align}
\label{eq:maltder}
\alpha'(\EN):=\frac{d}{d\EN} \alpha(\EN) 
	=\frac{1}{\EN^2 E(\tau e^{-\alpha(\EN)\tau})}
\end{align}
and
\begin{align}
\label{eq:maltder2}
\alpha''(\EN):=\frac{d^2}{d\EN^2} \alpha(\EN) 
	= \frac{1}{\EN^3 E(\tau e^{-\alpha(\EN)\tau})}
	\left(\frac{E(\tau^2 e^{-\alpha(\EN)\tau})}
		{\EN E(\tau e^{-\alpha(\EN)\tau})^2} -2\right).
\end{align}
\end{proposition}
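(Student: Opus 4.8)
The plan is to read the defining equation \eqref{eq:malthus} as an implicit relation and apply the real-analytic implicit function theorem, after first securing existence and uniqueness of $\alpha(h)$ by monotonicity. The central object is the map $L(\alpha):=E(e^{-\alpha\tau})$, which I would treat as the moment generating function of the lifetime $\tau$ evaluated at $-\alpha$. Under Assumption \ref{ass:1} the condition $P(\tau\leq 0^+)=0$ forces $\tau>0$ almost surely, so differentiating under the expectation gives $L'(\alpha)=-E(\tau e^{-\alpha\tau})<0$ and $L''(\alpha)=E(\tau^2 e^{-\alpha\tau})>0$ wherever these are finite; hence $L$ is strictly decreasing and strictly convex. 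The first task is to identify the interval of finiteness: $L(\alpha)$ is finite exactly when $-\alpha<\betamax$, that is on $(-\betamax,\infty)$, and the standard theory of moment generating functions guarantees that $L$ is real analytic on the interior of this interval, with derivatives given by the expectations $(-1)^k E(\tau^k e^{-\alpha\tau})$.

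Next I would pin down the range of $L$ to settle existence and uniqueness. As $\alpha\to\infty$, dominated convergence gives $L(\alpha)\to 0$, and as $\alpha\downarrow-\betamax$ monotone convergence gives $L(\alpha)\uparrow E(e^{\betamax\tau})$ (possibly $+\infty$). Thus $L$ maps $(-\betamax,\infty)$ strictly monotonically and bijectively onto the open interval $(0,E(e^{\betamax\tau}))$. The Malthus equation $h\,L(\alpha)=1$ is equivalent to $L(\alpha)=1/h$, which admits a (necessarily unique) solution precisely when $1/h<E(e^{\betamax\tau})$, i.e.\ when $h\,E(e^{\betamax\tau})>1$; this is exactly the condition $h>\hmin$. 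Since $\betamax\geq 0$ (because $E(e^{0\cdot\tau})=1<\infty$) and $\tau\geq 0$, we have $E(e^{\betamax\tau})\geq 1$, whence $\hmin=1/E(e^{\betamax\tau})\leq 1$, which establishes the stated bound (with the convention $\hmin=0$ when $E(e^{\betamax\tau})=\infty$).

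With existence and uniqueness in hand, I would apply the real-analytic implicit function theorem to $F(\alpha,h):=h\,L(\alpha)-1$, which is jointly real analytic on $\{\alpha>-\betamax\}\times(0,\infty)$, being linear in $h$ and analytic in $\alpha$. At a solution $(\alpha(h),h)$ the partial derivative $\partial_\alpha F=h\,L'(\alpha)=-h\,E(\tau e^{-\alpha\tau})$ is strictly negative, hence nonzero, so the theorem yields a real-analytic branch $\alpha(h)$ solving $F=0$ on $(\hmin,\infty)$, which by uniqueness coincides with the Malthus parameter. The derivative formulas then follow by implicit differentiation: from $\partial_\alpha F\cdot\alpha'(h)+\partial_h F=0$ with $\partial_h F=L(\alpha)=1/h$ one recovers \eqref{eq:maltder}, and differentiating once more, using $\tfrac{d}{dh}E(\tau e^{-\alpha\tau})=-\alpha'(h)E(\tau^2 e^{-\alpha\tau})$, gives \eqref{eq:maltder2}.

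I expect the only genuinely delicate point to be the justification of analyticity of $L$ together with the validity of differentiating under the expectation; the interior analyticity is standard for a moment generating function on the open interval of finiteness, and it is crucial that I never need to differentiate at the endpoint $\alpha=-\betamax$, since for every $h>\hmin$ the solution $\alpha(h)$ lies strictly inside $(-\betamax,\infty)$. The remaining algebra leading to the explicit form of $\alpha''(h)$ is routine once the implicit differentiation is correctly set up.
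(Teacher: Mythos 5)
Your proposal is correct and follows essentially the same route as the paper: both treat $h\,E(e^{-\beta\tau})-1=0$ as an implicit relation, use real analyticity of the moment generating function of $\tau$ on the interior of its domain of finiteness together with strict monotonicity in $\beta$ to get existence and uniqueness, and then invoke the real-analytic implicit function theorem followed by implicit differentiation to obtain \eqref{eq:maltder} and \eqref{eq:maltder2}. Your additional care with the range of $L$ and the bound $\hmin\le 1$ fills in details the paper only asserts, but does not change the argument.
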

\begin{proof}
Consider the function $g:(0,\infty)\times\R\mapsto[0,\infty]$ defined
by
\begin{align}
\label{eq:g}
g(\EN,\beta)=\EN E(e^{-\beta\tau})-1.
\end{align}
We wish to identify the range of $h$ such that $\alpha(\EN)$ exists
satisfying $g(\EN,\alpha(\EN))=0$, and to determine its smoothness
properties as a function of $\EN$. 
When $\beta=0$, $E(e^{-\beta\tau})=1$ and so $\hmin\leq1$. 
As $E(e^{-\beta\tau})$
is the moment generating function of a non-negative random variable,
$\tau$, evaluated at $-\beta$, it is a real analytic function on
the interior of the domain on which it is finite, i.e. for
$\beta\in(-\betamax,\infty)$. Therefore as $g(\EN,\beta)$ defined in
\eq{eq:g} is a product of real analytic functions, it is also real
analytic on $(0,\infty)\times(-\betamax,\infty)$. 
As $g(\EN,\beta)$ is a monotonic decreasing function of $\beta$,
for any $\EN\in(\hmin,\infty)$ there exists a unique $\beta$ such
that $g(\EN,\beta)=0$. Thus we can apply the real analytic 
version of the Implicit Function Theorem, e.g. Theorem 1.8.3 \cite{Krantz02a}.
This establishes that $\alpha(\EN)$ satisfying \eq{eq:malthus} exists
for $\EN\in(\hmin,\infty)$, that $\alpha$ is real analytic for
$h>\hmin$, and that its first derivative satisfies
\begin{align*}
\alpha'(h) = 
\left.
-
\frac{\partial}{\partial h} g(h,\beta)|_{(\EN,\alpha(\EN))}
\middle/
	\frac{\partial}{\partial \beta} g(h,\beta)|_{(\EN,\alpha(\EN))},
\right.
\end{align*}
which gives \eq{eq:maltder}. To obtain \eq{eq:maltder2}, one
differentiates \eq{eq:maltder}.
\end{proof}

The real analyticity of $\alpha(\EN)$ allows us to characterize
the difference in growth rates of two populations in terms of the
Taylor expansion of $\alpha(\EN)$. This will be useful as if
there is small probability of label-loss, then the label-positive
and total populations have similar, but non-identical, average
offspring number. We will relate the difference in their growth
rates to the dynamics of the average generation of the population.

\subsection{One way labeling populations}

We consider a two-type reducible age-dependent branching process
previously studied, for example, in \cite{Jagers69}. That is, we
consider a cell system that starts with a single progenitor that
is positive for a label so that $\Zp(0)=1$ and $\Zn(0)=0$, and
define the total population at time $t$ to be $Z(t) = \Zp(t)+\Zn(t)$.
Each cell is assumed to have an independent and identically distributed
lifetime at the end of which they independently give rise to a
random number of offspring. Positive label cells can become negative
label cells, but the reverse does not happen. We are interested in
populations where the label does not indicate a phenotypic change
so that lifetime distributions do not depend on the label.

Depending upon the experimental system and marker employed, the
process of delabeling can occur in different ways and so one may
have distinct models. Thus we make a general assumption that
encompasses several of these.  Notationally, let the random variable
$N^+(p)$ define the number of label-positive offspring of a
label-positive cell, where $p\in(0,1)$ captures the probability of
delabeling. We will make a technical assumption that ensures that
the distribution of the number of label-positive offspring of a
label-positive cell is less than its total (i.e. label positive
and negative) offspring and that as $p$ drops to $0$ we assume that
we recover the random variable describing the total (i.e. label
independent) offspring of a cell, denoted $N:=N^+(0)$.

\begin{assumption}
\label{ass:2}
Lifetimes are independent and identically distributed, and independent
of independent and identically distributed type-dependent offspring
numbers. The lifetime distribution for cells, $P(\tau\leq t)$ for
$t\in[0,\infty)$,
is non-lattice and satisfies $P(\tau\leq0^+)=0$. If $p>0$, the
number of label-positive offspring of a label-positive cell is less
than its total number of offspring: $N^+(p)$ is strictly stochastically
dominated by $N=N^+(0)$, i.e. $P(N^+(p)\leq n)\geq P(N\leq n)$
for all $n\in\{0,1,\ldots\}$ and $P(N^+(p)\leq n)> P(N\leq n)$ for
some $n$. Thus, defining $E(N^+(p))=\ENplus$, $\ENplus<E(N)=\EN$
for all $p>0$. We assume that $\rho_N(s)=E(s^N)$ for $s\in\R$, is
finite in a neighborhood of $1$, that $\lim_{p\downarrow0}\ENplus=\EN$,
and that $\ENplus$ is real analytic in $p$.
\end{assumption}

{\it Example 1:} delabeling occurs immediately prior to a cell's
division, independently with probability $p$. In this case the
offspring of a label-positive cell's are either all labeled or all
delabeled. Thus with $N$ denoting the label-independent offspring
random variable,
\begin{align*}
E\left(s^{N^+(p)}\right) = p + (1-p)E\left(s^N\right) = p + (1-p)\rho_N(s)
\end{align*}
and, in particular, $\ENplus  = (1-p)\EN$.

{\it Example 2:} delabeling of the offspring of a label-positive cell
occurs independently with probability $p$ at birth. In this case
\begin{align*}
E\left(s^{N^+(p)}\right) = \sum_{n\geq0} s^n 
	\sum_{m\geq n} P(N=m) {m\choose n} (1-p)^np^{m-n},
\end{align*}
so that again $\ENplus = (1-p)\EN$, but higher order moments are
smaller than those in Example 1.

{\it Example 3:} the offspring of a label-positive cell experience
asymmetric label-loss. There are genetic constructs where, on
division, labels are lost asymmetrically \cite{zhang2013}. That is,
if a label-positive cell divides in two and one of its two offspring
loses its label, then the other does not. As a result, if $p$ is the
probability of delabeling and $q$ is the probability that a division
rather than death occurs, then
\begin{align*}
E\left(s^{N^+(p)}\right) = q\left(sp + s^2(1-p)\right),
\end{align*}
$E(N)=\EN = 2q$ and the mean number of label positive offspring is 
$E(N^+(p))=\ENplus=(1-p/2)\EN$.

In all three of these examples $\ENplus$ is a linear function of
$p$, so that Assumption \ref{ass:2} holds. Based on results in
\cite{jagers1969,Athreya76} in conjunction with Proposition
\ref{prop:malder}, the following theorem can be deduced.

\begin{theorem}
\label{thm:twopop}
If $\ENplus>1$, under Assumption \ref{ass:2},
if $\lim_{t\to\infty}\Zp(t)>0$, then almost surely
\begin{align}
\lim_{t\to\infty} \frac{1}{t} \log
	\left(\frac{\Zp(t)}{Z(t)}\right)
	&= \alpha(\ENplus)-\alpha(\EN)\nonumber\\
	&= p \ENplusprime \alpha'(\EN)
	+\frac{1}{2}p^2 \left(\ENplusprime^2
		\alpha''(\EN)+\alpha'(\EN) \ENplusprimeprime\right)
		+O(p^3) \label{eq:taylor},
\end{align}
where
\begin{align*}
\ENplusprime  = \ENplusprimefull
\text{ and }
\ENplusprimeprime  = \ENplusprimeprimefull,
\end{align*}
and, in particular, if $\lim_{t\to\infty}\Zp(t)>0$, then
almost surely
\begin{align}
\lim_{p\to0}\lim_{t\to\infty} \frac{1}{pt} \log
	\left(\frac{\Zp(t)}{Z(t)}\right)
	&= \ENplusprime\alpha'(\EN).
	\label{eq:taylor2}
\end{align}
\end{theorem}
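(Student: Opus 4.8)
The plan is to observe that each of the two populations appearing in the ratio is, on its own, a single-type age-dependent branching process to which the classical strong growth theorems apply, so that the limit in \eqref{eq:taylor} is a difference of Malthus parameters obtained by subtraction, after which the analyticity established in Proposition \ref{prop:malder} delivers the expansion in $p$. First I would note that the label-positive cells constitute a single-type age-dependent branching process in their own right: a label-positive cell lives for a time distributed as $\tau$ and produces $N^+(p)$ label-positive offspring, each again a label-positive cell with the same lifetime and offspring laws, while label-negative descendants never re-enter the $\Zp$ count. Hence $\Zp(t)$ is the size process of a single-type process with lifetime $\tau$ and offspring number $N^+(p)$, of mean $\ENplus>1$ by hypothesis, so that by Proposition \ref{prop:malder} its Malthus parameter $\alpha(\ENplus)$ exists; likewise $Z(t)$ is the size process of the label-independent single-type process with offspring number $N$, mean $\EN$, and Malthus parameter $\alpha(\EN)$. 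The exponential-moment condition on $\rho_N$ in Assumption \ref{ass:2} supplies the $x\log x$ moment required by the strong limit theorem, and since $N^+(p)$ is stochastically dominated by $N$ the same bound transfers to the label-positive offspring law.

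Next I would apply the classical single-type results \cite{jagers1969,Athreya76}: for a supercritical, non-lattice age-dependent branching process the Malthus-normalized size converges almost surely to a limit that is strictly positive exactly on the survival event. Applied to each process this gives $e^{-\alpha(\EN)t}Z(t)\to W$ and $e^{-\alpha(\ENplus)t}\Zp(t)\to W^+$ almost surely, with $W,W^+\geq0$ and strictly positive on the respective survival events. The event $\{\lim_{t\to\infty}\Zp(t)>0\}$ is precisely survival of the label-positive process, on which $W^+>0$; and because $Z(t)\geq\Zp(t)$, survival of the label-positive population forces survival of the total population, so $W>0$ there as well. On this event I would then subtract the two logarithmic growth rates,
\begin{align*}
\frac1t\log\left(\frac{\Zp(t)}{Z(t)}\right)
=\frac1t\log\Zp(t)-\frac1t\log Z(t)\to \alpha(\ENplus)-\alpha(\EN),
\end{align*}
which is the first equality of \eqref{eq:taylor}.

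For the second equality I would invoke the real analyticity furnished by Proposition \ref{prop:malder}. Since $\alpha$ is real analytic on $(\hmin,\infty)$ and, by Assumption \ref{ass:2}, $\ENplus$ is real analytic in $p$ with $\ENplus|_{p=0}=\EN$, the composite $p\mapsto\alpha(\ENplus)$ is analytic near $p=0$. Writing $\ENplus-\EN=\ENplusprime\,p+\tfrac12\ENplusprimeprime\,p^2+O(p^3)$ and substituting into the Taylor expansion of $\alpha$ about $\EN$, then collecting powers of $p$, yields \eqref{eq:taylor}, the coefficients $\alpha'(\EN)$ and $\alpha''(\EN)$ being exactly those computed in \eqref{eq:maltder}--\eqref{eq:maltder2}. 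Dividing by $p$ and letting $p\to0$ retains only the linear term and gives \eqref{eq:taylor2}; since for each fixed $p$ the inner $t$-limit is already the deterministic constant $\alpha(\ENplus)-\alpha(\EN)$, the nested limit involves no interchange of limits and no measurability difficulty.

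The step I expect to require the most care is the simultaneous control of the two survival events together with the verification that Assumption \ref{ass:2} genuinely supplies the hypotheses of the strong limit theorem for \emph{both} processes. Concretely, I would confirm that the exponential-moment bound on $N$ transfers, via stochastic domination, to $N^+(p)$ so that the normalized label-positive process converges to a strictly positive limit, and that $\{\lim_{t\to\infty}\Zp(t)>0\}$ is an event on which $W^+>0$ and $W>0$ hold together; once this is secured, the remainder is the routine analytic bookkeeping of the Taylor expansion.
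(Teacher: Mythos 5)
Your proposal is correct and follows essentially the same route as the paper: the paper likewise obtains the first equality by citing the single-type strong limit theorems of \cite{jagers1969,Athreya76} for $\Zp(t)$ and $Z(t)$ separately (the label-positive subpopulation being a single-type process because delabeling is one-way and lifetimes are label-independent), and then derives \eqref{eq:taylor} and \eqref{eq:taylor2} by composing the real analyticity of $\alpha$ from Proposition \ref{prop:malder} with that of $\ENplus$ from Assumption \ref{ass:2} and Taylor expanding in $p$. You simply supply more of the details (the $x\log x$ moment transfer by stochastic domination and the alignment of the survival events) that the paper's terse proof leaves implicit.
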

\begin{proof}
The asymptotic behavior of $\Zp(t)$ and $Z(t)$ follow directly
from \cite{jagers1969,Athreya76}. By Proposition \ref{prop:malder}, $\alpha(\EN)$
is real analytic for $h>\hmin$ and $1\geq\hmin$, and, by Assumption
\ref{ass:2}, $\ENplus$ is real analytic with
$\lim_{p\downarrow0}\ENplus=\EN$. Thus $\alpha(\ENplus)$ is real analytic
and we can take a Taylor expansion
of $\alpha(\ENplus)$ in $p$ around $0$ and so \eq{eq:taylor} holds.
The relationship in \eqref{eq:taylor2} follows from \eq{eq:taylor}
taking the limit as $p$ tends to $0$.
\end{proof}

For comparison with these two-population results, we need to establish
properties of the total generation, the sum of the generations of
living cells, for a single-type population.

\subsection{Total generation of a branching process}

In order to precisely define the total generation of a branching
process, some unwieldy definitions are necessary. These shall be
largely side-stepped in the analysis, but are included to precisely
define the objects of interest to us.  We use a modification of
ideas in \cite{Harris63} to do so and consider a single-type
construction as that is sufficient for our needs.

The generation of an object is defined to be the number of
ancestors it possesses. As such, it is natural
to begin with a single object at time $0$ defined to be in
generation $0$. Let $\tau^{(i)}_j$ denote the length of the life
of the $j$-th object in the $i$-th generation and let $N^{(i)}_j$
denote the number of offspring of the $j$-th object in the $i$-th
generation. Starting with a single object,
the total number of objects that will, at any
stage, exist in the $n$-th generation is governed by a Galton-Watson
recursion:
\begin{align*}
K_0=1 \text{ and } 
	K_{n+1}=\sum_{k=1}^{K_n} N^{(n)}_k \text{ for } n\geq1, 
\end{align*}
where the empty sum is defined to be $0$. That is, the number of
objects that will ever be in generation $1$ is the offspring number
of the first object, $K_1=N^{(0)}_1$.  The number of offspring that
will ever exist in generation $2$ is the sum of the number offspring
of those members of generation $1$, and so forth, leading to the
above recursion for $K_n$.

Each object in generation $n$ is defined by a life-history,
$<i_0,\ldots,i_n>$, where this object is the $i_n$-th child of the
$i_{n-1}$-child of the $i_{n-2}$-th child and so forth back to
$i_0=0$. The indices run over objects that exist, so that
$i_n\in\{1,\ldots,K_n\}$. The birth time of the object $<i_0,\ldots,i_n>$
is the sum of lifetime of its ancestors,
\begin{align*}
T_{<i_0,\ldots,i_n>} = \sum_{k=0}^{n-1}\tau^{(k)}_{i_k}.
\end{align*}
The time the object ceases to be is $T_{<i_1,\ldots,i_n>} + \tau^{(n)}_{i_n}$.
In terms of these definitions, the total number of objects present
at time $t$ can be identified as all objects of any generation that
are alive at time $t$, 
\begin{align}
\label{eq:Zdef}
Z(t) = \sum_{n=0}^\infty \sum_{<i_0,\ldots,i_n>} 
	\chi\left(T_{<i_0,\ldots,i_n>}\leq t < T_{<i_0,\ldots,i_n>} + \tau^{(n)}_{i_n}\right)
\end{align}
where $\chi$ is the indicator function. That is, the population
alive at time $t$ is all those objects whose birth time is before
or equal to $t$ and whose death time is after $t$. 

The generation of an object $<i_0,\ldots,i_n>$ is $n$ and thus the
sum of the generations of all objects existing at time $t$ is
\begin{align}
\label{eq:Gdef}
G(t) = \sum_{n=0}^\infty n \sum_{<i_0,\ldots,i_n>} 
	\chi\left(T_{<i_0,\ldots,i_n>}\leq t < T_{<i_0,\ldots,i_n>} + \tau^{(n)}_{i_n}\right).
\end{align}
That is, for each object alive at time $t$ in generation $n$, the
total generation at that time is incremented by $n$. 

Before considering the probabilistic system, we begin with a
fundamental, non-probabilistic lemma that will give an indication
as to why one expects that the total generation of all living cells
$G$ to grow in a similar fashion to $Z\log Z$, where $Z$ is the
number of presently living cells, assuming that cells can either
die or divide into two at the end of their lives.

\begin{lemma}
\label{lem:min}
For a binary family tree beginning with a single progenitor in generation
$0$, if there are $Z>0$ cells alive then
the average generation per cell satisfies $G/Z\geq \log_2 Z$.
\end{lemma}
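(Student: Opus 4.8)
The plan is to recognize this as a Kraft-inequality statement dressed in branching-process language, and then to finish by convexity. The starting point is a structural observation about which cells can be alive simultaneously. Since a cell ceases to exist at the instant it divides into its two offspring, a cell and any of its descendants are never alive at the same time. Consequently, among the $Z$ living cells no one is an ancestor of another. Identifying each living cell with the node of the rooted binary tree at depth equal to its generation $g_i$, the living cells therefore form a set of pairwise incomparable nodes.

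For such a set I would next establish the Kraft-type inequality
\begin{align*}
\sum_{i=1}^{Z} 2^{-g_i} \le 1.
\end{align*}
The cleanest argument assigns to each living cell $i$, sitting at depth $g_i$, the collection of infinite descendant paths through it; in the complete binary tree this collection carries weight $2^{-g_i}$ under the natural measure that splits its mass equally at each branching. Because the living cells are pairwise incomparable, these collections are disjoint, and since their union lies in the full path space of weight $1$, the weights sum to at most $1$. Equivalently, one can prove the bound by induction on the largest generation present, repeatedly replacing a deepest pair of sibling nodes by their parent, which never decreases the sum $\sum 2^{-g_i}$ and reduces the maximal depth.

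With the inequality in hand, the result follows from convexity. The map $x\mapsto 2^{-x}$ is convex, so Jensen's inequality applied to the uniform average of the generations $g_1,\ldots,g_Z$ gives
\begin{align*}
2^{-G/Z} = 2^{-\frac{1}{Z}\sum_{i=1}^{Z} g_i} \le \frac{1}{Z}\sum_{i=1}^{Z} 2^{-g_i} \le \frac{1}{Z},
\end{align*}
where the final step is the Kraft inequality. Taking base-$2$ logarithms of the outer terms yields $-G/Z \le -\log_2 Z$, that is $G/Z \ge \log_2 Z$, as claimed. The case $Z=1$ is the trivial bound $g_1\ge0=\log_2 1$. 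The main obstacle I anticipate is not the convexity step but making the first two paragraphs rigorous against the formal life-history definitions of $Z$ and $G$: one must verify that the events of being simultaneously alive genuinely preclude an ancestor-descendant relationship (so that the living cells form an antichain) and justify the disjointness of the assigned path sets, after which the Kraft inequality and hence the lemma are immediate.
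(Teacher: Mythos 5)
Your proof is correct, but it takes a genuinely different route from the paper's. The paper argues by identifying the extremal configuration: writing $Z=2^n+m$, it asserts that $G$ is minimized when the living cells occupy only the two adjacent generations $n$ and $n+1$ (with $2^n-m$ and $2m$ cells respectively), and then verifies $Z\log_2 Z\leq G$ for that configuration by reducing it to the elementary inequality $(1+x)\log_2(1+x)\leq 2x$ on $[0,1]$. You instead observe that the living cells form an antichain in the binary tree (a parent and its offspring are never simultaneously alive), invoke the Kraft inequality $\sum_i 2^{-g_i}\leq 1$, and conclude by Jensen applied to the convex map $x\mapsto 2^{-x}$. Your route buys two things: it avoids having to justify the extremal-configuration claim, which the paper only asserts ``can be formally established using the equations given above,'' and it generalizes immediately (to $m$-ary offspring with $\log_m$, and to any antichain rather than only sets realizable as a population alive at a fixed time). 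The paper's route is more elementary in that it uses no measure on path space and no convexity, only a single-variable calculus estimate, and it exhibits exactly when the bound is tight (at $Z=2^n$). Both arguments are sound; the one point you rightly flag --- that simultaneously alive cells are pairwise non-ancestral --- does follow directly from the paper's formal definitions, since a cell alive at $t$ satisfies $t<T+\tau$ while its children are born at time $T+\tau$.
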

\begin{proof}
Consider a family tree with $Z = 2^n+m$ cells alive (i.e. $Z$ leaves), where
$n\in\{0,1,\ldots\}$ and $m\in\{0,1,...,2^{n+1}-2^n\}$. The minimal
$G$ given $Z$ is attained is when $2^n-m$ cells are in generation
$n$ and $2m$ cells are in generation $n+1$, which can be 
formally established using the equations given above. Thus
\begin{align*}
Z\log_2 Z=(2^n+m)\log_2(2^n+m)     
\text{ and }
        G = (2^n-m) n + 2m(n+1).
\end{align*}
If $m=0$ or $m=2^n$, these are equal and so the difference between
the two is zero at each $2^n$ and one needs to be delicate with
one's estimates. A rearrangement of terms gives $Z\log_2 Z\leq G$
if and only if
\begin{align*}
        (2^n+m) \log_2(1+m/2^n) \leq 2m. 
\end{align*}
A sufficient condition for this to be true is to relax the problem,
letting $m = x2^n$ with $x\in[0,1]$, and establish that
\begin{align*}
        (1+x)\log_2(1+x) \leq 2x \text{ for all } x \in[0,1], 
\end{align*}
which is readily achieved by calculus.
\end{proof}

In a system with death, in general there is no equivalent upper
bound to Lemma \ref{lem:min} as it's possible that there is only
one cell alive, $Z=1$, and the generation of that cell, and hence
$G$, can be arbitrarily large.  For branching processes, however, under
general conditions the population either dies out or ultimately
grows in a somewhat regular fashion, \cite{Harris63}, and so Lemma
\ref{lem:min} anticipates that, in a suitable sense, $G(t)/(tZ(t))$
becomes constant if the population survives. Indeed, amongst other
results, this is something we shall establish.

It is not possible to study the total generation of living cells,
$G(t)$, separately from the total population, $Z(t)$, as they are
strongly coupled and their dynamics are intertwined. This relationship
appears as a cross term in \eq{eq:integral} when we consider the
evolution of the joint probability generating function of the two,
giving rise to an integral equation of unusual form. Despite this
form, it is, in general, susceptible to analysis.

\begin{assumption}
\label{ass:3}
Lifetimes are independent and identically distributed and independent
of independent and identically distributed offspring
numbers. The lifetime distribution, $P(\tau\leq t)$ for $t\in[0,\infty)$,
is non-lattice and satisfies $P(\tau\leq0^+)=0$. The probability
generating function of the number of offspring, $\rho_N(s)=E(s^N)$
for $s\in\R$, is finite in a neighborhood of $1$. We denote its
first and second derivative at one by
\begin{align}
\label{eq:meanN}
\EN=E(N)=\frac{d}{ds}\rho_N(s)|_{s=1}
\text{  and  }
\ENtwo=E(N^2)-\EN=\frac{d^2}{ds^2}\rho_N(s)|_{s=1}.
\end{align}
\end{assumption}

\begin{theorem}
\label{thm:Ggen}
Under Assumption \ref{ass:3}, starting with $(Z(0),G(0))=(1,0)$,
the joint probability generating function of $\{(G(t), Z(t))\}$,
\begin{align*}
\rhogz(\sg,\sz,t) 
	:= E\left(\sg^{G(t)} \sz^{Z(t)}\right),
\end{align*}
satisfies the integral equation
\begin{align}
\label{eq:integral}
\rhogz(\sg,\sz,t) 
	= P(\tau>t)\sz
		+\int_0^t dP(\tau\leq u) \rho_N\left(\rhogz(\sg,\sg\sz,t-u)\right).
\end{align}
For $\EN>\hmin$, define 
\begin{align}
\label{eq:ch}
	c(\EN) = \ch{\EN},
\end{align}
where $\alpha(\EN)$ and $\alpha'(\EN)$ are defined in the statement of
Proposition \ref{prop:malder}.
For $\EN>\hmin$, 
\begin{align}
\label{eq:meanG}
\lim_{t\to\infty} 
\frac{E(G(t))}{te^{\alpha(\EN)t}} = c(\EN) \EN \alpha'(\EN),
\end{align}
and 
\begin{align}
\label{eq:GoverZ}
\lim_{t\to\infty} 
\frac{E(G(t))}{tE(Z(t))} = \EN \alpha'(\EN).
\end{align}
If $\EN>1$,
with $\ENtwo$ defined in \eq{eq:meanN},
defining
\begin{align}
\label{def:kappa}
\kappa=
\frac{\ENtwo E(e^{-2\alpha\tau})}
                {1- \EN E(e^{-2\alpha\tau})},
\end{align}
in addition we have that
\begin{align*}
\lim_{t\to\infty} 
\frac{E(G(t)Z(t))}{t E(Z(t))^2}
= \EN \alpha'(\EN) \kappa 
\text{ and }
\lim_{t\to\infty} 
\frac{E(G(t)^2)}{t^2E(Z(t))^2}
= \left(\EN \alpha'(\EN)\right)^2 \kappa.
\end{align*}
\end{theorem}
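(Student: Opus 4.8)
The plan is to treat the three assertions in turn, building each moment relation from the joint generating function by conditioning on the lifetime of the initial progenitor. First I would derive the integral equation \eqref{eq:integral}. Conditioning on the lifetime $\tau$ of the single founding cell, if $\tau>t$ the founder is alone and still in generation $0$, contributing $\sg^0\sz^1=\sz$ with probability $P(\tau>t)$. If instead $\tau=u\le t$, the founder produces $N$ offspring, each of which roots an independent subtree statistically identical to the original but run for time $t-u$; crucially, every cell in such a subtree has its generation raised by one relative to the subtree's own count, so a subtree contributing $(\tilde G,\tilde Z)$ adds $\tilde G+\tilde Z$ to $G(t)$ and $\tilde Z$ to $Z(t)$. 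Its generating function is therefore $E(\sg^{\tilde G+\tilde Z}\sz^{\tilde Z})=\rhogz(\sg,\sg\sz,t-u)$, which explains the substitution $\sz\mapsto\sg\sz$. Independence of the $N$ subtrees together with $\rho_N(s)=E(s^N)$ then give the integrand $\rho_N(\rhogz(\sg,\sg\sz,t-u))$, yielding \eqref{eq:integral}.

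For the first moments I would differentiate \eqref{eq:integral} at $\sg=\sz=1$, where $\rhogz=1$, $\rho_N'(1)=h$ and $\rho_N''(1)=v$. Writing $m_Z=\partial_{\sz}\rhogz$ and $m_G=\partial_{\sg}\rhogz$ evaluated at $(1,1,t)$, the chain rule (the shift contributing the extra $m_Z$ in the $\sg$-direction) produces the renewal equations
\begin{align*}
m_Z(t)&=P(\tau>t)+h\int_0^t m_Z(t-u)\,dP(\tau\le u),\\
m_G(t)&=h\int_0^t\left(m_G(t-u)+m_Z(t-u)\right)dP(\tau\le u).
\end{align*}
The raw kernel $h\,dP$ has mass $h$, so I would rescale by $e^{-\alpha t}$ with $\alpha=\alpha(h)$; since $hE(e^{-\alpha\tau})=1$ this makes $he^{-\alpha u}dP(\tau\le u)$ a proper probability distribution of mean $hE(\tau e^{-\alpha\tau})$. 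The key renewal theorem applied to $e^{-\alpha t}m_Z(t)$, whose forcing $e^{-\alpha t}P(\tau>t)$ is directly Riemann integrable, gives $m_Z(t)\sim c(h)e^{\alpha t}$, the constant simplifying to $c(h)$ via $\int_0^\infty e^{-\alpha u}P(\tau>u)du=(1-1/h)/\alpha$ and the identity $\alpha'(h)=1/(h^2E(\tau e^{-\alpha\tau}))$ of Proposition \ref{prop:malder}. For $m_G$ the rescaled forcing equals $e^{-\alpha t}(m_Z(t)-P(\tau>t))\to c(h)$, a nonzero constant; a proper renewal equation with constant-limit forcing grows linearly at rate $c(h)/(hE(\tau e^{-\alpha\tau}))=c(h)h\alpha'(h)$, giving \eqref{eq:meanG}, and \eqref{eq:GoverZ} follows on dividing the two asymptotics.

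The second-moment statements require the factorial moments $A=E(Z(Z-1))$, $B=E(GZ)$ and $C=E(G(G-1))$, obtained as $\partial_{\sz}^2$, $\partial_{\sg}\partial_{\sz}$ and $\partial_{\sg}^2$ of \eqref{eq:integral} at $(1,1)$. Differentiating twice through $\rho_N(\rhogz(\sg,\sg\sz,\cdot))$ yields renewal equations with kernel $h\,dP$ and forcings built from lower moments, the leading ones being $v\,m_Z^2$ for $A$, $v\,m_G m_Z$ for $B$ and $v\,m_G^2$ for $C$ (the shift again contributing extra $m_Z$- and $B$-type terms that are subdominant). Here I would rescale by $e^{-2\alpha t}$: because $h>1$ forces $\alpha>0$, the kernel mass becomes $hE(e^{-2\alpha\tau})<hE(e^{-\alpha\tau})=1$, so the rescaled equations are \emph{defective} renewal equations whose renewal measure is finite with all moments finite (the damping $e^{-2\alpha u}$ guarantees $E(\tau^k e^{-2\alpha\tau})<\infty$). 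Using $m_Z\sim c(h)e^{\alpha t}$ and $m_G\sim c(h)h\alpha'(h)te^{\alpha t}$, the rescaled forcings respectively tend to $vc(h)^2E(e^{-2\alpha\tau})$, grow like $vc(h)^2h\alpha'(h)E(e^{-2\alpha\tau})\,t$, and like $vc(h)^2(h\alpha'(h))^2E(e^{-2\alpha\tau})\,t^2$. Dividing each forcing-limit by $1-hE(e^{-2\alpha\tau})$ produces $\kappa$ through \eqref{def:kappa}, giving $A\sim c(h)^2\kappa e^{2\alpha t}$, $B\sim c(h)^2 h\alpha'(h)\kappa\,te^{2\alpha t}$ and $C\sim c(h)^2(h\alpha'(h))^2\kappa\,t^2e^{2\alpha t}$. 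Since $E(Z^2)=A+m_Z$, $E(GZ)=B$ and $E(G^2)=C+m_G$ differ from $A,B,C$ only by lower-order terms, and $E(Z(t))^2\sim c(h)^2e^{2\alpha t}$ with the factor $c(h)^2$ cancelling, I recover the two stated limits.

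The main obstacle is the renewal analysis with polynomially growing forcing: the classical key renewal theorem covers only directly Riemann integrable forcing, whereas here the forcings converge to nonzero constants or grow like $t$ or $t^2$. I would handle this through the renewal-measure representation $\phi=z*U$, combining the elementary renewal theorem $U([0,t])\sim t/\mu$ in the proper case ($m_G$) with the finiteness of the total mass and first two moments of the defective renewal measure in the defective case ($A,B,C$); equivalently one may pass to Laplace transforms and read off the orders of the poles at $\alpha$ and $2\alpha$ by a Tauberian argument. Subsidiary care is needed to justify differentiating under the integral sign and the finiteness of these moments at finite $t$, both of which follow from finiteness of $\rho_N$ near $1$ and the exponential damping.
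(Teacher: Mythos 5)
Your proposal is correct and follows essentially the same route as the paper: the integral equation is obtained by conditioning on the founder's lifetime and observing that shifting the root to generation $1$ turns $G$ into $G+Z$ (hence the substitution $\sz\mapsto\sg\sz$); the mean is handled by differentiating, rescaling by $e^{-\alpha(\EN)t}$ so that $\EN e^{-\alpha(\EN)u}\,dP(\tau\le u)$ is a proper probability measure, and applying the renewal theorem for forcing converging to a constant (the paper cites Asmussen's Theorem 6.2(a) where you sketch the $z*U$ argument); and the second moments are handled exactly as you describe, via the defective kernel $\EN e^{-2\alpha(\EN)u}\,dP(\tau\le u)$ and the renewal theorem for defective measures, with the dominant forcing terms $\ENtwo E(G)E(Z)$ and $\ENtwo E(G)^2$ producing $\kappa$. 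The only cosmetic differences are that the paper imports $\lim_t e^{-\alpha t}E(Z(t))=c(\EN)$ and $\lim_t e^{-2\alpha t}E(Z(t)^2)=c(\EN)^2\kappa$ from Harris rather than rederiving them, and handles $e^{-\alpha(\EN)t}P(\tau>t)\to 0$ for $\alpha(\EN)<0$ by an explicit Chernoff bound.
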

\begin{proof}
We first establish that \eq{eq:integral} holds.
Begin with a family tree starting with a single cell
at time $0$, $Z(0)=1$, defined to be in generation $0$, so that 
$G(0)=0$. Consider the impact of altering this initial condition
so that the founding cell is in generation $1$ at $t=0$, $G(0)=1$.
We will write $G_1(t)$ for the process counting the total generation
of living cells starting with $G(0)=1$, so that, in particular, $G_1(0)=1$,
and retain $G(t)$ for total generation given $G(0)=0$.  
The key observation,
which can be established
using equations \eq{eq:Gdef} and \eq{eq:Zdef}, is that
\begin{align*}
G_1(t) &= \sum_{n=1}^\infty (n+1) \sum_{<i_1,\ldots,i_n>} 
	\chi\left(T_{<i_1,\ldots,i_n>}\leq t < T_{<i_1,\ldots,i_n>} + \tau^{(n)}_{i_n}\right)
	=G(t)+Z(t).
\end{align*}
This can be understood by noting that, leaving all other aspects
of the family tree unchanged, the move from $G(0)=0$ to $G(0)=1$
causes the generation of every living cell to increase by $1$.

Thus, taking expectations over the lifetime of the initial cell
and using the independence and identical distribution of the sub-trees
generated by the first birth, but noting they start with cells that
are in generation $1$, we obtain
\begin{align*}
\rhogz(\sg,\sz,t) 
	&= P(\tau>t)\sz 
		+\int_0^t dP(\tau\leq u) \rho_N\left(
		E\left(\sg^{G_1(t-u)}\sz^{Z(t-u)}\right)\right)\\
	&= P(\tau>t)\sz 
		+\int_0^t dP(\tau\leq u) \rho_N\left(
		E\left(\sg^{G(t-u)+Z(t-u)}\sz^{Z(t-u)}\right)\right)\\
	&= P(\tau>t)\sz 
		+\int_0^t dP(\tau\leq u) \rho_N\left(
		E\left(\sg^{G(t-u)}(\sg\sz)^{Z(t-u)}\right)\right)\\
	&= P(\tau>t)\sz 
		+\int_0^t dP(\tau\leq u) \rho_N\left(
		\rhogz(\sg,\sg\sz,t-u)\right),
\end{align*}
giving \eq{eq:integral}. The uniqueness of $\rhogz(\sg,\sz,t)$
for $\sg<1$ and $\sz<1$ follows from analogous arguments to those
in \cite{Harris63}, Chapter VI, utilizing the ideas in \cite{Levinson60}.

To establish \eq{eq:meanG}, note that using \eq{eq:integral}
\begin{align*}
E(G(t)) 
= \frac{\partial \rhogz(\sg,\sz,t)}{\partial {\sg}}|_{\sg=\sz=1}
&= \int_0^t dP(\tau\leq u) \EN E(G(t-u))+\int_0^t dP(\tau\leq u) \EN E(Z(t-u))\\
&= \int_0^t dP(\tau\leq u) \EN E(G(t-u)) + E(Z(t))-P(\tau>t),
\end{align*}
where we have used the equivalent, well known integral equation for
$E(Z(t))$ to obtain the final equality. We wish to study the
asymptotics of this equation, dividing it by $t\exp(\alpha(\EN)t)$
and taking limits as $t\to\infty$.  To this end, we employ a result
from renewal theory, \cite{Asmussen98} Theorem 6.2(a), 
using the fact that by construction $\EN\exp(-\alpha(\EN)t)dP(\tau\leq t)$
is a probability measure,
\begin{align*}
\lim_{t\to\infty} \frac{E(G(t))}{te^{\alpha(\EN)t}} = 
\frac{1}{hE(\tau e^{-\alpha(\EN)\tau})} 
\left(\lim_{t\to\infty} e^{-\alpha(\EN)t} E(Z(t))
-\lim_{t\to\infty} e^{-\alpha(\EN)t} P(\tau>t)
\right),
\end{align*}
should the limits on the right hand side exist. The existence
of the first limit follows from results in \cite{Harris63},
\begin{align}
\label{eq:Zc}
\lim_{t\to\infty} e^{-\alpha(\EN)t} E(Z(t)) = c(\EN).
\end{align}
Consider $e^{-\alpha(\EN)\tau} P(\tau>t)$. If $\alpha(\EN)>0$,
then the limit automatically exists and is $0$. If $\alpha(\EN)<0$,
select $\beta>-\alpha(\EN)>0$ such that $E(\exp(\beta\tau))<\infty$.
From Proposition \ref{prop:malder}, we are guaranteed the existence
of such a $\beta$ as $h>\hmin$.  As $\beta>0$, as in the Chernoff
bound we have that
\begin{align*}
P(\tau>t)=E\left(1_{\{\tau>t\}}\right)\leq e^{-\beta t} E(e^{\beta\tau})
\text{ and hence }
\limsup_{t\to\infty} e^{\beta t} P(\tau>t)\leq E(e^{\beta\tau}).
\end{align*}
As $\beta>-\alpha(\EN)$, this ensures that
\begin{align*}
\lim_{t\to\infty} e^{-\alpha(\EN) t} P(\tau>t)=0.
\end{align*}
Hence, using \eq{eq:maltder},
\begin{align*}
\lim_{t\to\infty} \frac{E(G(t))}{te^{\alpha(\EN)t}} = 
	c(\EN)\EN \alpha'(\EN),
\end{align*}
as required. To obtain \eq{eq:GoverZ}, one uses \eq{eq:Zc} 
in conjunction with \eq{eq:meanG}.

Consider now, with a little re-arranging of terms,
\begin{align}
E(G(t)Z(t))
	=&\frac{\partial^2}{\partial\sz\partial\sg} 
		\rhogz(\sg,\sz,t)|_{\sg=\sz=1}\nonumber\\
	=&
	\int_0^t dP(\tau\leq u)
		\EN E(G(t-u)Z(t-u))
	+\int_0^t dP(\tau\leq u)
		\EN E(Z(t-u)^2)\nonumber\\
	&+\int_0^t dP(\tau\leq u)
		\ENtwo E(Z(t-u))^2
	+\int_0^t dP(\tau\leq u)
		\ENtwo
		E(G(t-u))E(Z(t-u)). \label{eq:EGZ}
\end{align}
We wish to establish that the limit
\begin{align*}
\lim_{t\to\infty}
	\frac{E(G(t)Z(t))}{te^{2\alpha(\EN)t}}
\end{align*}
exists and identify its limit. Using \eq{eq:Zc} and 
\begin{align*}
\lim_{t\to\infty} e^{-2\alpha(\EN)t}\,
	E((Z(t)^2))
	&= c(\EN)^2 \kappa,
\end{align*}
which can be found in \cite{Harris63}, it is clear that only the
last term on the right hand side of \eqref{eq:EGZ} could be non-zero.
As $e^{-2\alpha(\EN)t} dP(\tau\leq t)$ is a defective probability
measure, 
that is $\int_0^\infty e^{-2\alpha(\EN)t} dP(\tau\leq t)<1$,
an application of the Renewal Theorem for Defective Measures
\cite{Resnick1992}
in conjunction with the Dominated
Convergence Theorem leads us to
\begin{align*}
\lim_{t\to\infty} 
	\frac{E(G(t)Z(t))}{te^{2\alpha(\EN)t}}
= &\frac{\ENtwo}{1-\EN E(e^{-2\alpha(\EN)\tau})}
  \lim_{t\to\infty} 
	\int_0^t e^{-2\alpha(\EN)u}dP(\tau\leq u)
		\frac{E(G(t-u))}{te^{\alpha(\EN)(t-u)}}
		\frac{E(Z(t-u))}{e^{\alpha(\EN)(t-u)}}\\
= & \ENtwo
	\frac{E(e^{-2\alpha(\EN)\tau})}{1-\EN E(e^{-2\alpha(\EN)\tau})}
	c(\EN)^2 \EN \alpha'(\EN)\\
= & c(\EN)^2\EN \alpha'(\EN) \kappa.
\end{align*}

A similar argument reveals the other equality.
\end{proof}

From \eq{eq:taylor2} and \eq{eq:GoverZ}, in Theorems \ref{thm:twopop}
and \ref{thm:Ggen} respectively, we have our cornerstone result
relating the average generation of the population to the proportion
of positively labeled cells, justifying \eqref{eq:approx} and
\eqref{eq:two_tps}.

\begin{proposition}
\label{prop:result}
Under Assumptions \ref{ass:1} and \ref{ass:2}, if $\lim_{t\to\infty}\Zp(t)>0$,
then 
\begin{align*}
\lim_{t\to\infty} \frac{E(G(t))}{tE(Z(t))}
=\EN\alpha'(\EN) = 
\lim_{p\to0}\lim_{t\to\infty} \frac{1}{pt} \log
        \left(\frac{\Zp(t)}{Z(t)}\right)
\left(\frac{\EN}{\ENplusprime}\right)
\end{align*}
almost surely.
\end{proposition}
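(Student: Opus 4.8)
The plan is to read this proposition as an immediate corollary of the two preceding theorems, so the work is almost entirely one of assembling their conclusions and checking that the stated hypotheses supply what each theorem requires. The left-hand equality, $\lim_{t\to\infty} E(G(t))/(tE(Z(t))) = \EN\alpha'(\EN)$, is exactly \eqref{eq:GoverZ} of Theorem~\ref{thm:Ggen}, so first I would verify that Assumptions~\ref{ass:1} and~\ref{ass:2} together deliver the hypotheses of that theorem. The non-lattice property, $P(\tau\leq0^+)=0$, the independence of i.i.d.\ lifetimes from i.i.d.\ offspring numbers, and finiteness of $\rho_N$ near $1$ are all present in Assumption~\ref{ass:2} (with $N=N^+(0)$ the label-independent offspring), while Assumption~\ref{ass:1} supplies the mean $\EN$ and the real-analytic machinery of Proposition~\ref{prop:malder}; these are precisely the ingredients used in the first-moment part of Theorem~\ref{thm:Ggen}, Assumption~\ref{ass:3} differing only by the second-moment quantity $\ENtwo$, which plays no role in \eqref{eq:GoverZ}. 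I would also record that the survival hypothesis $\lim_{t\to\infty}\Zp(t)>0$ forces supercriticality: since label-positive cells are a subpopulation of the total, their persistence requires $\ENplus>1$ and hence $\EN>1\geq\hmin$, placing $\EN$ in the range $(\hmin,\infty)$ on which $\alpha$ is real analytic and \eqref{eq:GoverZ} is valid.

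For the right-hand equality I would invoke \eqref{eq:taylor2} of Theorem~\ref{thm:twopop}, which gives, on the survival event and almost surely, $\lim_{p\to0}\lim_{t\to\infty}\tfrac{1}{pt}\log(\Zp(t)/Z(t)) = \ENplusprime\alpha'(\EN)$. Multiplying through by the constant $\EN/\ENplusprime$ yields $\ENplusprime\alpha'(\EN)\cdot(\EN/\ENplusprime)=\EN\alpha'(\EN)$, the desired identity. The condition $\ENplus>1$ needed to apply Theorem~\ref{thm:twopop} is already secured by the survival hypothesis, and the strict stochastic domination in Assumption~\ref{ass:2} (so $\ENplus<\EN$ for every $p>0$ with $\ENplus\to\EN$ as $p\downarrow0$) gives $\ENplusprime\leq0$, with the linear forms of Examples 1--3 showing $\ENplusprime<0$, so the scaling factor $\EN/\ENplusprime$ is well defined. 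Combining the two equalities, and observing that the leftmost expression is deterministic (it is a ratio of expectations) while the rightmost is a limit of random ratios, the whole chain holds almost surely, which is exactly the assertion.

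The proof is thus bookkeeping rather than fresh estimation: all the analytic content sits in \eqref{eq:GoverZ} and \eqref{eq:taylor2}, and the proposition simply splices them at the common value $\EN\alpha'(\EN)$. The one point genuinely requiring care---the closest thing to an obstacle---is the hypothesis compatibility together with the well-definedness of the scaling factor: confirming that the single combined hypothesis ``Assumptions~\ref{ass:1} and~\ref{ass:2}'' simultaneously underwrites a first-moment result originally proved under Assumption~\ref{ass:3} and a growth-rate comparison proved under Assumption~\ref{ass:2}, and that $\ENplusprime\neq0$ so that division by $\ENplusprime$ is legitimate. To make the last point fully rigorous rather than appealing to the examples, I would note that $\ENplusprime=0$ would correspond to delabeling that is negligible to first order in $p$, which is excluded in the regime of interest where $\ENplus$ is linear in $p$; there $\ENplusprime<0$, so the identity is both meaningful and exact.
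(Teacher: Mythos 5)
Your proposal matches the paper's own treatment: the paper presents Proposition~\ref{prop:result} as an immediate consequence of combining \eqref{eq:GoverZ} from Theorem~\ref{thm:Ggen} with \eqref{eq:taylor2} from Theorem~\ref{thm:twopop} and rescaling by $\EN/\ENplusprime$, exactly as you do. Your additional checks on hypothesis compatibility and on $\ENplusprime\neq0$ are sound refinements of the same argument rather than a different route.
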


Consider the delabeling possibilities described after Assumption
\ref{ass:2}. In Examples 1 and 2, where delabeling occurs with
probability $p$ at the end of a lifetime or independently with
probability $p$ for each offspring, $\EN/\ENplusprime=-1$ and we
have that
\begin{align*}
\lim_{t\to\infty}\frac{E(G(t))}{tE(Z(t))}
	=
	-\lim_{p\to0}\lim_{t\to\infty} \frac{1}{pt} \log
	\left(\frac{\Zp(t)}{Z(t)}\right),
\end{align*}
almost surely, if $\lim_{t\to\infty}\Zp(t)>0$. Whereas, as in Example
3, if delabeling occurs via an asymmetric division construct,
$\EN/\ENplusprime=-2$ so that
\begin{align*}
\lim_{t\to\infty}\frac{E(G(t))}{tE(Z(t))}
	= 
	-2\lim_{p\to0}\lim_{t\to\infty} \frac{1}{pt} \log
	\left(\frac{\Zp(t)}{Z(t)}\right).
\end{align*}
almost surely if $\lim_{t\to\infty}\Zp(t)>0$. 

We have proved results for $E(G(t))/E(Z(t))$, experimentally
we would wish to know $E(G(t)/Z(t))$, the average generation per
progenitor. Thus we wish to establish that these results are a good
approximation for the latter. To achieve this we use the standard
approach, e.g.  \cite{Stuart94}, of taking a Taylor expansion of
the ratio around the ratio of the expectations, and quantifying how
the expectation of its leading order terms behave. 
Using the first and second order terms of the Taylor expansion, we have
\begin{align*}
E\left(\frac{G(t)}{Z(t)}|Z(t)>0\right) 
&\approx
\frac{E(G(t))}{E(Z(t))}\left(1+\frac{E(Z(t)^2)}{E(Z(t))^2}\right)
	- \frac{E(G(t)Z(t))}{E(Z(t))^2}.
\end{align*}
Using the results in Theorem \ref{thm:Ggen}, if $h>\hmin$
\begin{align*}
\lim_{t\to\infty}\frac{1}{t} \left(
\frac{E(G(t))}{E(Z(t))}\left(1+\frac{E(Z(t)^2)}{E(Z(t))^2}\right)
	- \frac{E(G(t)Z(t))}{E(Z(t))^2}\right)
=\EN \alpha'(\EN)
=
\lim_{t\to\infty}\frac{E(G(t))}{tE(Z(t))}.
\end{align*}
Thus using the ratio of the expectations as an approximation
to the expectation of the ratio is reasonable.

We can also use the Taylor approximation to determine the
approximate behavior of the variance of $G(t)/Z(t)$. Namely,
to first order
\begin{align*}
&E\left(\left(\frac{G(t)}{Z(t)}\right)^2|Z(t)>0\right) 
-E\left(\frac{G(t)}{Z(t)}|Z(t)>0\right)^2\\
&\qquad\approx
\frac{E(G(t)^2)}{E(Z(t))^2}
	-\frac{2E(G(t) Z(t))E(G(t))}{E(Z(t))^3}
	+\frac{E(G(t))^2E(Z(t)^2)}{E(Z(t))^4}.
\end{align*}
Using the results in Theorem \ref{thm:Ggen}, if $h>1$,
this shows that the variance is becoming small,
\begin{align*}
&\lim_{t\to\infty} \left(
\frac{E(G(t)^2)}{t^2E(Z(t))^2}
	-\frac{2E(G(t)Z(t))}{tE(Z(t))^2}
	\frac{E(G(t))}{tE(Z(t))}
	+\left(\frac{E(G(t))}{tE(Z(t))}\right)^2
	\frac{E(Z(t)^2)}{E(Z(t))^2}
	\right) = 0. 
\end{align*}
Thus we expect the average generation of a population of cells
to be a stochastically well behaved process; a phenomenon we
will observe in simulations described later.

\subsection{Large numbers of progenitors}
\label{subsec:largeno}

Cultures are typically started with more than a single progenitor
and so one expects that the estimator's accuracy will improve by
laws of large numbers. Here, by means that have nothing to do with
branching processes, we establish that this is indeed the case and
that, in particular, the ratios are both Asymptotically Normal,
e.g. \cite{Serfling80},
and the variance decreases as one over the number of progenitors.

In order to state the result, recall that a sequence of random
variables $\{X_n\}$ is said to be Asymptotically Normal, e.g.
\cite{Serfling80}, with means $\{\mu_n\}$ and variances $\{\sigma_n^2\}$
if $\sigma_n>0$ for all $n$ sufficiently large and the sequence
$\{(X_n-\mu_n)/\sigma_n)\}$ converges in distribution to $N(0,1)$,
the Gaussian distribution with mean $0$ and variance $1$.

\begin{theorem}
\label{thm:largen}
Let $\{(\Zp_i(t),Z_i(t))\}$ be a bivariate i.i.d sequence of possibly
correlated random variables, representing the label-positive and
total cell population at time $t$ from progenitor $i$, and let
$(\Zp,Z)$ be an independent copy. If the probability generating
function of $(\Zp,Z)$ is finite in a neighbourhood of $(1,1)$, then
\begin{align*}
-\frac1p \log
	\left(
	\frac{\sum_{i=1}^n\Zp_i(t)}{\sum_{i=1}^nZ_i(t)} \right)
\end{align*}
is Asymptotically Normal with
\begin{align} 
\mu_n = -\frac1p \log \frac{E(\Zp)}{E(Z)} 
\text{ and }
\sigma_n^2 = \frac1n \frac{1}{p^2} 
\left(\frac{\Var(\Zp)}{E(\Zp)^2}-2\frac{\Cov(\Zp,Z)}{E(\Zp)E(Z)}+\frac{\Var(Z)}{E(Z)^2}
\right),
\label{eq:varZplusZ}
\end{align} 
where $\Cov(X,Y)=E((X-E(X))(Y-E(Y)))$ and $\Var(X)=\Cov(X,X)$.

Let $\{(G_i(t),Z_i(t))\}$ be a bivariate i.i.d sequence of possibly
correlated random variables, representing the total generation
and total cell population at time $t$ from progenitor $i$, and let
$(G,Z)$ be an independent copy. If the probability generating
function of $(G,Z)$ is finite in a neighbourhood of $(1,1)$, then
$\sum_{i=1}^n G_i(t)/(\sum_{i=1}^n Z_i(t))$
is Asymptotically Normal with
\begin{align}
\mu_n=\frac{E(G)}{E(Z)}
\text{ and }
\sigma_n^2 = \frac1n 
\left(
\frac{\Var(G)}{E(Z)^2}-2\frac{\Cov(G,Z)E(G)}{E(Z)^3}+\frac{\Var(Z)E(G)^2}{E(Z)^4} 
\right).
\label{eq:varGZ}
\end{align}
\end{theorem}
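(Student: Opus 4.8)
The plan is to establish both Asymptotic Normality statements as essentially identical applications of the multivariate delta method to sample averages of i.i.d.\ bivariate random vectors. In both cases the object of interest is a smooth function $f$ applied to the pair of sample means $(\bar{X}_n, \bar{Y}_n)$, where in the first statement $(X_i,Y_i) = (\Zp_i(t), Z_i(t))$ and $f(x,y) = -\frac1p\log(x/y)$, and in the second $(X_i,Y_i) = (G_i(t), Z_i(t))$ with $f(x,y) = x/y$. The key structural observation is that
\begin{align*}
\frac{\sum_{i=1}^n \Zp_i(t)}{\sum_{i=1}^n Z_i(t)}
= \frac{\bar{X}_n}{\bar{Y}_n}
= f(\bar{X}_n,\bar{Y}_n)\quad\text{(up to the outer }-\tfrac1p\log\text{)},
\end{align*}
so that the estimators are smooth transformations of empirical means to which the standard machinery applies.

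First I would invoke the multivariate Central Limit Theorem: since $\{(X_i,Y_i)\}$ is i.i.d.\ with finite second moments (guaranteed by the hypothesis that the joint probability generating function is finite in a neighbourhood of $(1,1)$, which yields finiteness of all moments and in particular a finite covariance matrix $\Sigma$), the vector $\sqrt{n}\,((\bar{X}_n,\bar{Y}_n) - (E(X),E(Y)))$ converges in distribution to a bivariate Gaussian with mean zero and covariance $\Sigma$. Next I would apply the delta method, e.g. \cite{Serfling80}: because $f$ is continuously differentiable at the point $(E(X),E(Y))$ — noting $E(Z)>0$ so the denominator and the logarithm's argument are bounded away from zero — the transformed sequence $\sqrt{n}(f(\bar{X}_n,\bar{Y}_n) - f(E(X),E(Y)))$ is asymptotically Gaussian with variance $\nabla f^{\mathsf{T}} \Sigma \,\nabla f$. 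This immediately identifies $\mu_n$ as $f(E(X),E(Y))$ and gives the claimed $\sigma_n^2$ after computing the gradient.

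The remaining work is the explicit gradient computation, which is routine. For the first estimator, $\partial_x f = -\frac1{px}$ and $\partial_y f = \frac1{py}$ evaluated at $(E(\Zp),E(Z))$, and expanding $\nabla f^{\mathsf{T}}\Sigma\nabla f$ reproduces the three-term expression in \eqref{eq:varZplusZ}; the cross term carries the factor $-2$ from the off-diagonal entries of $\Sigma$. For the second estimator, $\partial_x f = 1/E(Z)$ and $\partial_y f = -E(G)/E(Z)^2$, and the same expansion yields \eqref{eq:varGZ}. The only point requiring a small amount of care — the closest thing to an obstacle — is verifying the differentiability hypotheses of the delta method at the relevant mean point, namely that $E(Z)>0$ and, for the first statement, that $E(\Zp)>0$ so that $f$ is well defined and smooth there; both follow from the standing assumption that label-positive cells persist. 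Beyond that, the proof is a direct assembly of the CLT and delta method with a bookkeeping calculation of the quadratic form.
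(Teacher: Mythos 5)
Your proposal is correct and follows essentially the same route as the paper: multivariate CLT for the vector of sample means, followed by the delta method (the paper applies it twice in sequence, first with $g(x,y)=x/y$ and then with $g(x)=-p^{-1}\log(x)$, whereas you compose them into a single map, which is equivalent). The gradient computations and the resulting variance expressions match \eqref{eq:varZplusZ} and \eqref{eq:varGZ}.
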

\begin{proof}
Both results start with an application of the Multi-variate Central 
Limit Theorem, e.g. \cite{Serfling80},
to
the partial sums of $\{(\Zp_i(t),Z_i(t))\}$ and $\{(G_i(t),Z_i(t))\}$
to establish their Asymptotic Normality.
Application of the 
Delta Method, the corollary on page 124 of
\cite{Serfling80},
with the function $g(x,y)=x/y$ then establishes
Asymptotic Normality of the ratios of the sums. An
additional application of the Delta Method
with the function $g(x)=-p^{-1}\log(x)$ establishes the Asymptotic
Normality of the logarithm of the ratio of sums.
\end{proof}

As a result of this theorem, we anticipate the variability in the
average generation number, as well as the variability in estimator,
to decrease significantly as the number of progenitors increases.

\section{Validation Using Simulated Data}
\label{sec:val_sim}

Here we use Monte Carlo simulations of age dependent branching
processes \cite{Harris63,Kimmel02} to validate the method, investigating
its appropriateness on individual stochastic sample paths for
parameter values of the order one would anticipate experimentally.

Define $\Zp(t)$ to be the number of label-positive live cells at
time $t$, $Z(t)$ to be the total number of living cells at time
$t$, $G(t)$ to be the sum of the generations of all living cells
at time $t$, so that $G(t)/Z(t)$ is the average generation of living
cells at time $t$. In this section each cell's lifetime is drawn
independently from the same distribution and, at the end of their
life, each cell independently either dies or gives rise to two
cells. We assume that label-positive cells delabel immediately
prior to division with probability $p$.

One of the estimator's primary features is that it can be used in
practice without knowledge of the lifetime distribution or death
rates. Thus we performed stochastic simulations over a
range of conditions, including lifetime distributions such as
lag-exponential, lag-log-normal and lag-gamma, and parameter values
such as $p\in(10^{-5},10^{-1})$, $t\in[0,300]$ hours and the average
number of offspring of a cell, $E(N)=\EN$, in $(0,2)$. These values
include those typically encountered in cell cycle experiments
\cite{Smith73,Hawkins09,Sun2012Direct,Sun2008Mitotic,Wasserstrom08,Weber2014} and
demonstrated the merits of the method, even for $t$ being relatively
small, at a few days, and $p$ of order $0.01$ per cell per generation.
To illustrate the insights gained from the simulation study, we
report on a specific, representative example, where lifetime
distribution is lognormal, which is found to be a good fit in
\cite{Hawkins09}. Figures \ref{fig:single_progenitor_gam} and
\ref{fig:mult_progenitor_gam} are the equivalents of
\ref{fig:single_progenitor} and \ref{fig:multiple_progenitors}, but
for delayed-gamma distributed cell lifetimes. The results for both
of these settings, as well as others that we have explored, are
qualitatively similar.

\subsection{Simulation parameterization}

For the lifetime distribution we assume a log-normal distribution
with parameter values based on data from long-term video microscopy
of murine B lymphocytes proliferating in response to CpG DNA
\cite{Hawkins09}, $\tau$ is log-normally distributed with mean $9.3$
hours and standard deviation $\sigma=2.54$ hours. We consider
populations that are, on average, expanding. 

For the probability of delabeling, we adopt estimates related to
existing division-linked labeling systems, namely Cre-induced mitotic
recombination \cite{Sun2008Mitotic,Liu2011Mosaic} and micro-satellite
mutation induced label activation
\cite{Slebos2002,Koole2013Versatile}. The probability of
Cre induced mitotic recombination has been estimated for a specific
construct as $10^{-2}$ per cell per generation \cite{Sun2008Mitotic},
while mutation rates in human micro-satellites were inferred to be
in the $(10^{-4},10^{-3})$ range per locus per generation
\cite{Sun2012Direct}.

Cell populations are simulated for $T=250$ hours. At the end of a
cell's lifetime, it divides into two cells with probability $0.8$
and dies with probability $0.2$, giving an average number of offspring
$h=1.6$. This leads to an average expansion by a factor of approximately
$10^6$ at $250$ hours and an average increase in the generation
number of more than 25. Note that 25 generations is far beyond what
current experimental method, such as continuous observation or
division-diluting dyes, are able to measure.

The mathematical results hold for sample population ratios where
label-positive cells persist.  If at some time $t\in(0,T]$ $\Zp(t)=0$,
we do not report the path from $t$ on as this represent samples for
which the experimenter would have no estimate beyond that time.

\subsection{Single progenitors: generation consistency, label
ratio variability at early times}

Systems that start with a single label-positive cell are subject
to the greatest variability at early times and thus this situation
is the one that poses the greatest difficulties for estimation.
While most experimental systems will start with more than a single
progenitor and we know from the mathematical results in Section
\ref{subsec:largeno} that they are better behaved, with variance
in estimates decreasing inversely proportional to starting cell
number, we begin by considering this most challenging circumstance.

Starting with a single label-positive progenitor at $\Zp(0)=1$,
panels (a) and (b) of \Fig{fig:single_progenitor} plot $15$
realizations of both the average generation $G(t)/Z(t)$ (orange
lines) and the estimate of it, $-1/p\log(\Zp(t)/Z(t))$ (blue lines).
The average generation across realizations is consistent, even for
early times, and grows linearly in time as Theorem \ref{thm:Ggen}
predicts.

In contrast, the per-realization estimates, the blue lines, which
the sample path theory establishes will
ultimately grow with the same slope
as the average orange line, exhibit substantial variability at short
time-scales. Two dynamical properties contribute to these fluctuations.
At the beginning, there are no label-negative cells and so the
logarithm of the ratio is $0$.  When the first label-negative cells
appear, the ratio of label-positive to total cell population can
dramatically change, especially if the population size is small.
In effect, the theorem is not in force until both the label positive
and label negative populations are large enough for average behavior
to become dominant,
which only happens after at least an order 
of $1/p$ 
division events have occured.
As one might expect, by decreasing $p$ from
$10^{-2}$ to $10^{-3}$ this effect is amplified, as can also be
seen in \Fig{fig:single_progenitor} (a) and (b).

In this setting, the typical behavior is initially for underestimation
as can be seen in the inter-quartile range plot of panels (c) and
(d) of \Fig{fig:single_progenitor} for $10^4$ simulations. 
This can be understood as there is no estimate prior to at least one
cell delabeling, which requires, on average, $1/p$ divisions
to occur. As $p$ is assumed small, typically one does not intially
get an estimate and this delay explains the observed lag in the
blue lines. On the other hand, in the unlikely event that an early
cell delabels, then the approximation \eq{eq:approx} initially gives
signifcant over-estimates.
This bias, which we have observed consistently across other
simulations, suggests that if starting with a single progenitor
family, \eq{eq:approx} only becomes accurate at longer time-frames,
i.e. after several generations have passed. Note
that the mean value of the
estimator at early times over multiple runs is effected by outliers,
hence the mean lies outside the inter-quartile range.

\subsection{Single progenitors: mitigating variability by use of two
time-points}

Results in Section \ref{sec:math} for the time-dependent sample
paths of estimates suggest that one way to mitigate this initial
variability in the single time-point estimate $-1/p\log(\Zp(t)/Z(t))$
is to make two measurements at distinct times, $t_1<t_2$ and use
their difference to estimate $G(t)/Z(t)$ via the approximation
\eq{eq:two_tps} in Section \ref{sec:intro}:
\begin{align*}
\frac{G(t)}{Z(t)}
        \approx \left(\frac{t}{t_2-t_1}\right) \left(-\frac1p
        \log\left(\frac{\Zp(t_2)Z(t_1)}{\Zp(t_1)Z(t_2)}\right)\right).
\end{align*}
The effect of this difference measurement
is the removal of the
initial fluctuations on a path-by-path basis.
This is illustrated in \Fig{fig:single_progenitor}, panels (e) and
(f). Using Monte Carlo methods and a standard kernel estimation
method, the plots compare three values: the actual average generation
$G(t_2)/Z(t_2)$ at $t_2=250$ hours (orange); the single time-point
estimates $-1/p\log(\Zp(t_2)/Z(t_2))$ (blue); and, with $t_2=250$
and an extra measurement at $t_1=200$ hours, the two time point
estimates in \eq{eq:two_tps} with $t=t_2=250$ hours. The two-point
estimate is not only more symmetrically distributed around the
true value, but its variability is significantly less than the
single time-point estimates.

\subsection{Multiple progenitors: improved consistency}

While the method works well for single progenitors, experiments are
typically seeded with multiple progenitors. Theorem \ref{thm:largen}
establishes that the variability decreases inverse linearly with
the number of progenitors, $Z(0)$, and so one expects that starting
with even a small number of cells will eradicate the early time
variability and here we demonstrate that this is, indeed, the case.
For $\Zp(0)=100$, \Fig{fig:multiple_progenitors} panels (a) and (b)
correspond with those of \Fig{fig:single_progenitor}.  The orange
lines, showing the per-realization of the average generation, remain
largely unchanged, but the accuracy of the single time point estimates
(blue) are greatly improved.

This comes about as even if an individual progenitor's family
generates label-negative cells relatively late this is balanced by
another progenitor's family where delabeling happened relatively
early. Thus both $\Zp(t)$ and $Z(t)$ rapidly become sufficiently
substantial for the limit theorem to be in force and for the estimator
to be precise.

\Fig{fig:multiple_progenitors} panels (c) and (d) show Monte Carlo
created Box plots of the per-sample average generation, the single
time-point estimate and the two time-point estimate for a range of
initial progenitor numbers. Sample-to-sample variation is greatly
reduced even with as few as $10$ progenitors. Thus in standard
experimental systems, which are often seeded with hundreds or
thousands of cells, the estimator is expected to be precise.

\section{Validation Using Published Data}
\label{sec:val_data}

The simulations of the previous section show that the estimator
works well for parameterizations akin to those found in experimental
systems and provide insight into the impact of a number of factors,
such as the delabeling probability and the number of progenitors,
on the quality of the estimates. Here we use a range of published
data to further
investigate the estimator's applicability.

\subsection{Early development of \textit{C. elegans}}

The first application takes lineage trees of the early development
of \textit{C. elegans} determined by time-lapse microscopy and
reported on in \cite{richards2013}. As the entire family tree is
known, we have direct access to measurements of average generation
as a function of time. Simulating a stochastic delabeling process
on this tree, as would be experimentally possible via a division
linked mosaic construct similar to the one introduced in \cite{Koole2014}
or the one we propose in Section \ref{sec:expdes}, we can directly 
investigate the estimator's accuracy.

\Fig{fig:celegans} (a) shows
the embryonic lineage tree of
the nematode \textit{Caenorhabditis elegans} 
as
constructed from data 
published by Richards et al. \cite{richards2013}. In their study,
a reference lineage tree was constructed based on output from
automated tracking of nuclei in $18$ embryos, recorded over
approximately six hours at 1.5 minute intervals by three-dimensional
resonance-scanning confocal microscopy. Each node in the figure
represents a cell division and the tree contains information about
all parent-daughter relationships and gives the timing of more than
$10^3$ division events during an embryo's development. This leads
to a population size of approximately 600 cells, \Fig{fig:celegans} (b).
From the timing of the division events, lifetimes are readily computed,
shown in \Fig{fig:celegans} (c).

For this tree, lifetime durations correlate positively with both
generation and with the time of birth relative to fertilization,
\Fig{fig:celegans} (d), illustrating a lack of homogeneity. Regarding
independence, there are correlations throughout the tree.
\Fig{fig:celegans} (e) demonstrates, for example, that lifetimes
of siblings are positively correlated.
These features are not in line with the assumptions under which
some of the the sample path properties of the estimator were
established, but are consistent with the derivation via 
properties of the cumulant generating function in Section
\ref{sec:math}.

To test the accuracy of the estimator, we stochastically decorated
the tree. Beginning with a single label-positive progenitor, daughters
are delabeled with probability $p$. Should they remain label-positive,
their daughters are delabeled independently with probability $p$
and so on for the whole tree. \Fig{fig:celegans} (a) illustrates
one such random decoration, with blue indicating label positive and
white label negative.

For $10^4$ independent Monte Carlo realizations of this delabeling
process, \Fig{fig:celegans} (f) shows the average (blue) and the
inter-quartile range of the estimates (blue), using \eqref{eq:approx},
at each time point during the first six hours of the development
as well as the true average generation number of the embryo in
orange, proving to be accurate. 

As explained in Section \ref{sec:estover}, if not all cells are
initially label-positive one can estimate average generation from
proportion measurements at two distinct times. To mimic this, we
considered the development of the labeled trees after $t_1=150$
minutes. At $t_1=150$ minutes one measurement is taken, giving the
proportion of label-positive cells and the approximation \eq{eq:two_tps}
to determine the average generation growth since $t_1$ to the time
of a later measurement to determine the average generation since
$t_1$.  The results are plotted in \Fig{fig:celegans} (g), which
show accuracy.

\subsection{Micro-satellite mutation reporter systems}
Existing micro-satellite reporter systems implement division-linked
labeling and are suitable for both \textit{in vitro} and \textit{in
vivo} applications
\cite{Slebos2002,Gasche2003,Kozar2013,Koole2013Versatile,Koole2014}.
Micro-satellites are short repeating motifs found in DNA that, with
small probability per cell division, are subject to insertion or
deletion of copies of the 
motif. Mutation reporter systems possess
an initially out of frame gene for a fluorescent protein. Micro-satellite
mutation on division results in the gene becoming in frame, with
the cell becoming fluorescent. This fluorescent state is inherited
by offspring and is either irreversible, as in \cite{Koole2014},
or has a reportedly negligible likelihood of reversion
\cite{Gasche2003,Kozar2013}. Thus with micro-satellite mutation
acting as a driver of rare division linked change, cell fluorescence
serves as a label for average generation estimation.

The two data sets we analyze are from studies published by Gasche
et al. \cite{Gasche2003} on human colorectal cancer cells (HCT116)
and Kozar et al. \cite{Kozar2013} on mouse embryonic fibroblast
(MEF). The probability that a cell becomes fluorescent as a result
of a micro-satellite mutation is estimated as $6.1\times10^{-4}$
and $1.1\times10^{-4}$ in each study, respectively. For generation
estimation, these values serve as $p$, the probability of delabeling
per cell division.

The data sets consist of measurements of proportions of labeled
cells over several days from in vitro proliferating cell lines that
carry transgenic micro-satellite mutation reporter cassettes,
allowing us to estimate the average generation of the cell population.
In addition to the proportions, \cite{Kozar2013} also provides the
growth curve, which is exponential from day one to day four before
slowing down, probably as a consequence of cell confluence
\cite{Kozar2013}, while \cite{Gasche2003} reports an exponential
growth for the whole time-course of their experiment.

Applying the estimator $-1/p\log(\Zp(t)/Z(t))$ with the proportions,
$\Zp(t)/Z(t)$, recapitulated in \Fig{fig:MS} (a-b), and label loss
probabilities provided in the studies, we can estimate the average
generation $G(t)/Z(t)$ as a function of time, \Fig{fig:MS} (c). For
both data sets, the estimated average generation increases linearly
over time at a rate of approximately one per day.

Direct validation of the method with this data would require knowledge
of the average generation, which is not reported in the above
studies. The growth curve \Fig{fig:MS} (d), however, allows an
alternate method to estimate this quantity for comparison. We
directly estimate the average from this curve by assuming for
exponentially distributed life-times and the absence of cell death
as in this special case one can show that $E(G(t))/E(Z(t))=2\mu t$,
where $\mu$ is the Malthus parameter \cite{Harris63,Kimmel02}.
Estimating $\mu$ from the growth curve from day one up to day four,
using linear regression on $\log(Z(t))$, predicts that the increase
of average generation relative to the first time point as $1.06\:
t$. This predictions matches closely the estimated average generation,
\Fig{fig:MS} (c). Taken together, the analysis of these data shows
that inferring the average generation from the proportion of labeled
cells carrying micro-satellite mutation reporter cassettes is an
experimentally feasible approach.

\section{Experiment Design}
\label{sec:expdes}

As an illustrative example of an ideal experiment including estimation
of the probability of label-loss $p$, we describe one possible
implementation of a division-linked one-way labeling system via a
genetic construct, \Fig{fig:construct} (a), that combines existing
experimental techniques: expression of a fluorescent protein such
as Blue Fluorescent Protein (BFP); the use of a cell-cycle specific
promoter to create division-linked changes
\cite{sakaue2008,Bai2010Nucleosomedepleted}; and site-specific
recombination \cite{Nagy2000Cre}.

The construct is composed of two elements. First CRE recombinase
expression is placed under the control of a cell-cycle specific
promoter \cite{sakaue2008,Bai2010Nucleosomedepleted}. The cell-cycle
promoter is employed to ensure a division linked expression of CRE
recombinase. Then, two LoxP sites are placed at each end of the BFP
gene. As cells enter cell cycle CRE recombinase is expressed,
site-specific recombination will occur probabilistically between
the two LoxP recombination sites \cite{Nagy2000Cre} and the BFP
gene is excised.

For average generation inference, the desirable likelihood of
recombination should be small, which can be achieved with a low
efficiency CRE recombinase. Thus BFP+ cells are regarded as label
positive and BFP- cells are label negative.  This design is similar
in spirit to an existing one used to create mosaics in Zebrafish
\cite{Koole2014}, which employs micro-satellites as a division
linked probabilistic driver and a kaloop in lieu of site-specific
recombination.

The proportion of label-positive cells in a cell system incorporating
this construct can be readily determined via fluorescence-activated
cell sorting (FACS). 
The essential remaining ingredient for average generation inference
is the
determination of the probability of label-loss. This can be achieved
by a 
\textit{in vitro} FACS experiment, as illustrated by the
\textit{in silico} simulated experiment 
in \Fig{fig:construct} (b)-(c). A collection of cells
that incorporate the construct are stained with a division diluting
dye, such as CFSE \cite{quah12}, 
with noise added for illustrative purposes.
After division, cells
are gated based on their generation, and the proportion of
label-positive cells per-generation determined by their fluorescence.
With $\BFPp(n)$ being the measured proportion of label-positive
cells in generation $n$, in the presence of a large number of initial
progenitors the probability label-loss, $p$, irrespective of whether
deaths occur, is $1-\BFPp(n+1)/\BFPp(n)$, which should be the same
for every $n$. In particular, if all initial cells are BFP+, then
$\BFPp(0)=1$ and $p=1-\BFPp(1)=\BFPn(1)$, the proportion of BFP
negative cells in generation $1$. Thus the probability of label-loss
can be readily determined experimentally.  For this simulation, an
estimated value of $p$ is determined using this measurement for
$n=0$, $p=\BFPn(1)$.

The accuracy of the method can then be checked during or after the
experiment by comparing the average generation as determined by the
division diluting dye with the value estimated via equation
\eqref{eq:approx} with the measured $p$, \Fig{fig:construct} (d).
Once $p$ is known, similar cell stain experiments with distinct
cell lines or distinct stimulii can be performed to validate the
method's inference before its use \textit{in vivo} and for generation
counts beyond the range possible with division diluting dyes or
fluorescence microscopy.

The construct described in this section is intended as an exemplar,
creatable with present technology. Further experimental possibilities,
including those based on naturally occurring mutations, can be found
in the following Discussion.

\section{Discussion}
\label{sec:disc}

The estimators introduced in this article, \eq{eq:approx} and
\eq{eq:two_tps}, offer means to estimate a biologically significant
quantity, the average generation of cell populations, for 
cell systems where direct measurement poses a significant
challenge. Their appropriateness is not intuitively apparent and
their development requires several mathematical results. 
Throughout the development of those results we assume that the
label change is one-way, which can be shown not to be essential for
the cumulant generating function derivation of the estimators, but
is relied on in establishing the sample-path results as if
the label change is reversible, so that labels can be gained
as well as lost, then asymptotically it is known that the growth
rate of the number of label-positive cells equals the growth rate
of the total number of cells \cite{Mode66}. 

Despite its less than obvious genesis, 
the estimator has highly
desirable features: it allows the population to be subject to death
and division; it does not need to know the number of progenitors;
it does not require knowledge of cell-cycle distributions; and,
subject to knowing the per-division label-change probability, only
requires the measurement of proportions. Comparison of estimates
for simulations suggest that the estimator is accurate for physiological
reasonable parameterizations.

The estimators \eqref{eq:approx} and \eqref{eq:two_tps} 
depend on the proportion of label-positive cells in
the population, where the label can be any of the cell's properties
that is lost with a small probability during its division cycle.
Several naturally occurring and engineered cellular processes exist
that approximately match this requirement. 
Somatic mutation is probably the best studied natural process
of this kind. In this case germ-line cells are defined to be
label-positive, while label-negative cells correspond to those with
mutations in their DNA. The proportion of label-positive cells can
be assessed by next generation sequencing. In humans there is a
high rate of fidelity in DNA replication, giving an error rate of
$5 \times 10^{-11}$ per base per division \cite{Drake1998} and so
using only a single nucleotide location as a label appears 
impractical. Considering instead the full germ-line genome as a
positive label, 
even though each nucleotide may suffer mutations with distinct
rates, and potentially in concert, there is still an over-all
probability of label-change that
has been estimated in
human sperm cells as approximately $p=4 \times 10^{-3}$ \cite{Drake1998}.
In principle, this rate could be used as the probability of delabeling
for our estimator. Alternatively, hotspots of mutation could also
be analyzed. For example, micro-satellites, short repetitive sequences
of DNA, mutate in humans with $p$ in the range $(10^{-4},10^{-3})$,
which we have used for the parameterization to evaluate the performance
of our estimator.

The use of hotspots of mutation as a label
suffers from the difficulty that delabeled cells could relabel due
to the occurrence of further mutations. The impact of this can be
ameliorated by creating an asymmetry in the likelihood of relabeling
to delabeling. One selects a set of $N$ sites for measurement, each
having their own probability of mutation, $p_i$. One defines the
label-positive cells as those that have germ-line values at all
sites. The likelihood of a labeled cell delabeling is the likelihood
that any of the sites mutate away from germ-line, $p=1-\prod_{i=1}^N
(1-p_i) \approx \sum_{i=1}^N p_i$, if the individual $p_i$ are
small. The likelihood that a cell with a single mutation relabels
is $p_i \ll p$, creating a significant asymmetry. Moreover, for a
cell that has had several mutations, the likelihood that all revert,
which is necessary for relabeling, is smaller still.

While it seems likely that one must rely on naturally occurring
processes to estimate the average generation of a cell population
for humans, genetically modified cell lines and animals have the
advantage that measuring the proportion of label-positive cells can
be directly facilitated by fluorescent markers. The construct
described in the Experiment Design section could be built or existing
constructions, such as Cre-mediated sister chromatid recombination
\cite{Sun2008Mitotic,Liu2011Mosaic,zhang2013} and micro-satellite
mutation induced label activation
\cite{Slebos2002,Koole2013Versatile}, can be adapted,
so that reversion does not occur.
Using fluorescence-activated cell sorting or live imaging technologies
and the estimators developed in this article it 
could
be possible to follow the average generation of specific cell
populations, like neoplastic tissue or the skin, in living organisms
over long periods of time.

Recent evidence from \textit{in vivo} cell lineage tracing techniques such
as Cellular Barcoding \cite{golden1995} has demonstrated that
apparently homogeneous progenitors give rise to heterogeneous
families in cancer, immunology and hematopoesis, e.g.
\cite{lu2011,kreso2013,buchholz2013,gerlach2013,naik13}. Theoretical
work developing methodologies to interrogate data from these
experiments is ongoing, but one key difficulty that must be overcome
is that the generation of individual families is unknown \cite{perie14}.
Combining any of the above experimental methodologies and using the
estimator developed in the present article would allow inference
of the per progenitor average generation, enhancing the deductive
power of these techniques, which have already proved invaluable.
Having established the estimators and validating them through
simulations and comparison with published data, we believe these
are promising avenues.

{\bf Acknowledgments:}
The authors thank S{\o}ren Asmussen (Aarhus University) for drawing
their attention to \cite{Asmussen98}. The work of T.W., L.P. and
K.D. was supported by Human Frontier Science Program grant RGP0060/2012.
K.D. was also supported by Science Foundation Ireland grant 12 IP
1263.

\newpage

\begin{figure}
\centering\includegraphics[scale=0.7]{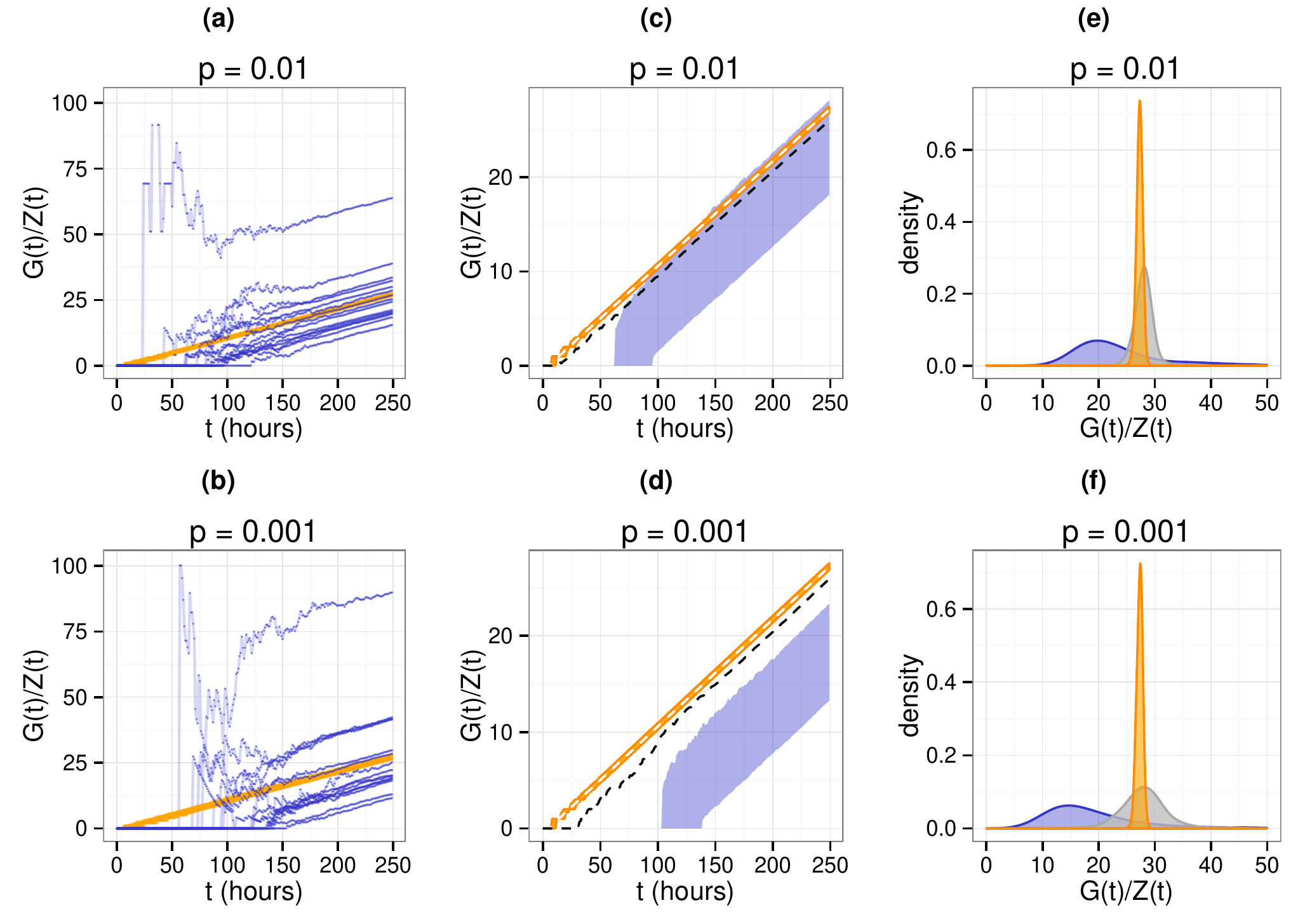}
\caption{
{\bf Single progenitor simulation validation.}
Simulations, described in detail in 
Section \ref{sec:val_sim},
start with a single label-positive cell at time $0$. Plots shown
for two label-change probabilities, $p=10^{-2}$ (top row) and
$p=10^{-3}$
(bottom row).
For $15$ independent realizations, orange lines display the actual
average generation, $G(t)/Z(t)$. Blue lines display the estimates
$-1/p\log(\Zp(t)/Z(t))$ from \eqref{eq:approx}, which ultimately
increase linearly with the same slope as the average generation,
but initially show significant variability.
(c)-(d)
For $100$ independent realizations of the process, the dashed line
reveals that the empirical average of the estimates is close to the
true value. The blue region is an inter-quartile plot of the
$-1/p\log(\Zp(t)/Z(t))$ estimates, with the upper boundary being
where 25 of the realizations are larger and the lower boundary being
where 25 of realizations are smaller, which typically exhibits
under-estimation.
(e)-(f)
Using Monte Carlo methods and a standard kernel estimator, these
panels show, for $t_2=250$ hours, the density of observations of
$G(t_2)/Z(t_2)$ in orange, the single time-point estimate
$-1/p\log(\Zp(t_2)/Z(t_2))$ in blue, and the two time-point estimate
\eqref{eq:two_tps} in grey with an additional time point at $t_1=200$
hours. The additional time-point improves the estimate by removing
early variability on a realization-by-realization basis.
}
\label{fig:single_progenitor}
\end{figure}

\begin{figure}
\centering\includegraphics[scale=0.7]{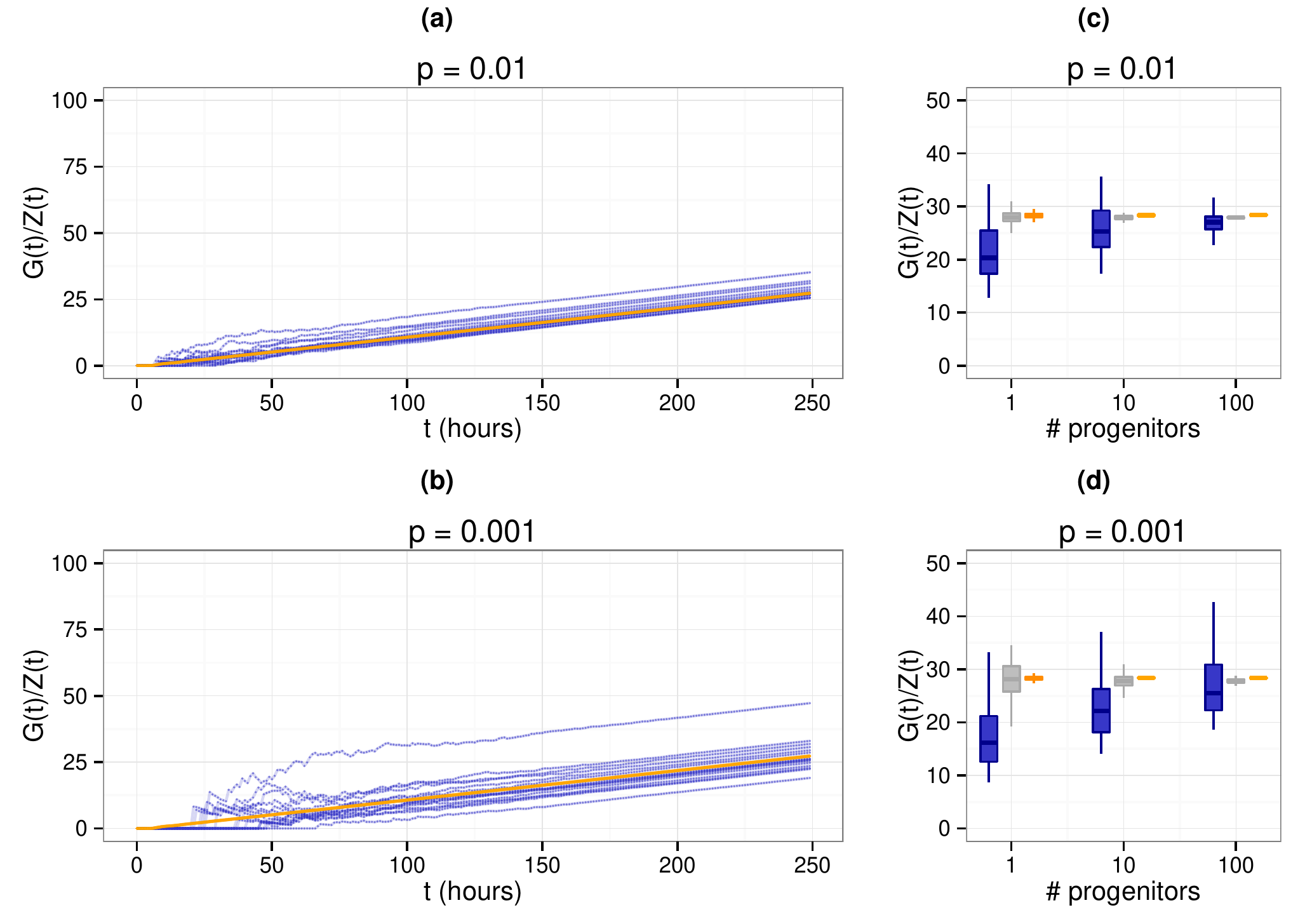}
\caption{
{\bf Multiple progenitor in silico validation.} Plots shown for two
label-change probabilities, $p=10^{-2}$ (top row) and $p=10^{-3}$
(bottom row).
(a)-(b) 
Sample paths of $15$ independent realizations with $100$ progenitors.
Orange lines correspond to average generation, $G(t)/Z(t)$. Blue
lines display the estimates $-1/p\log(\Zp(t)/Z(t))$, whose early
time fluctuation is significantly less pronounced than with
a single progenitor, as in \Fig{fig:single_progenitor} (a)-(b).
(c)-(d)
For 1, 10 and 100 progenitors, Monte Carlo generated box plots of
average generation (orange) and single time-point estimates (blue)
at $t=250$ hours, and two time-point estimates (grey), with the
additional measurement at $t_1=200$ hours. As predicted by theory,
accuracy improves substantially, in a $p$
dependent fashion, as the number of progenitors increases.
}
\label{fig:multiple_progenitors}
\end{figure}

\begin{figure*}
\centering\includegraphics[scale=0.5]{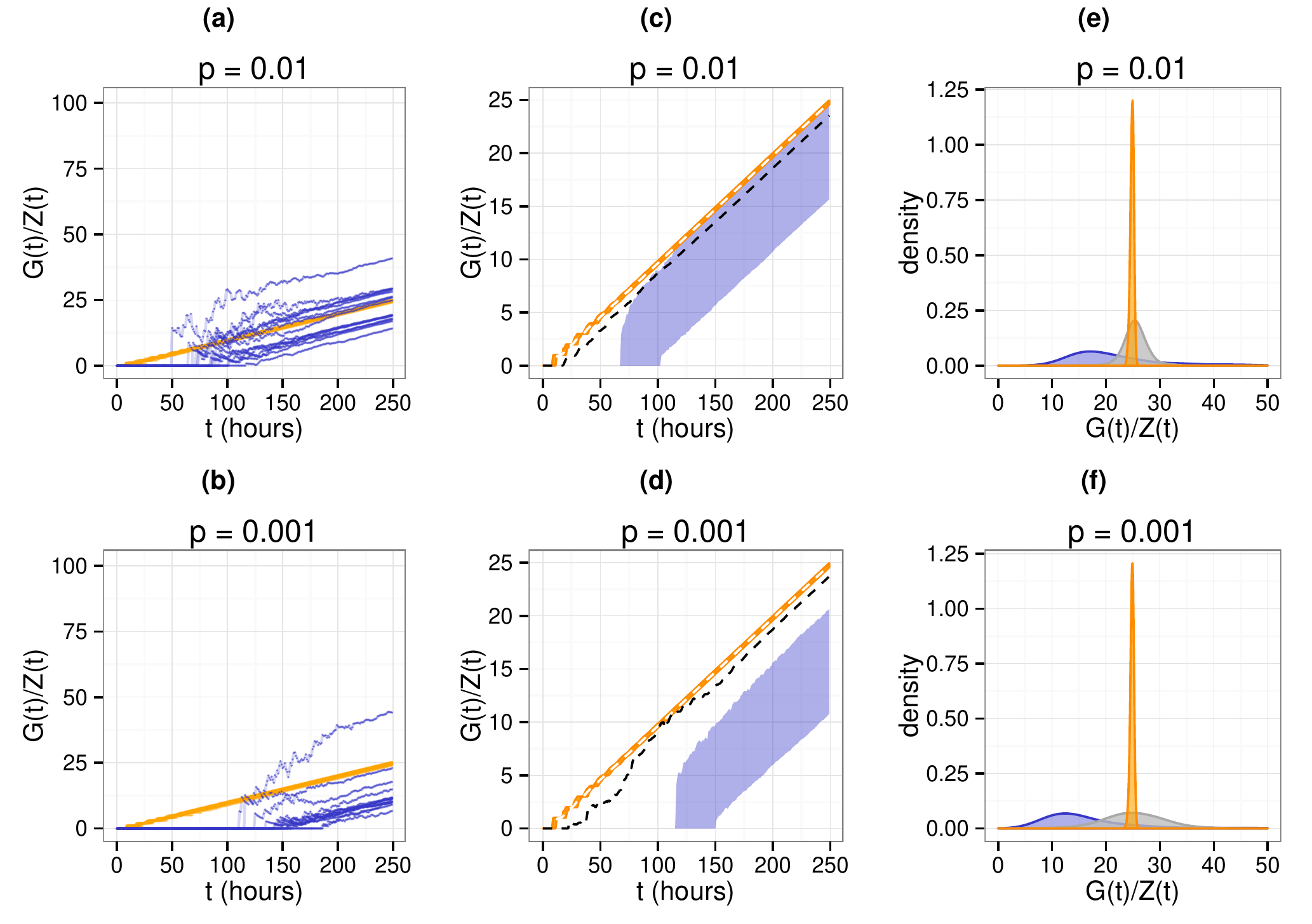}
\caption{Similar results to those shown in \Fig{fig:single_progenitor},
but using a cell life-time distributed according to a delayed
Gamma distribution with shape parameter of $3$, scale parameter of
$1$ and a delay of 7 hours.
}
\label{fig:single_progenitor_gam}
\end{figure*}
\begin{figure*}
\centering\includegraphics[scale=0.5]{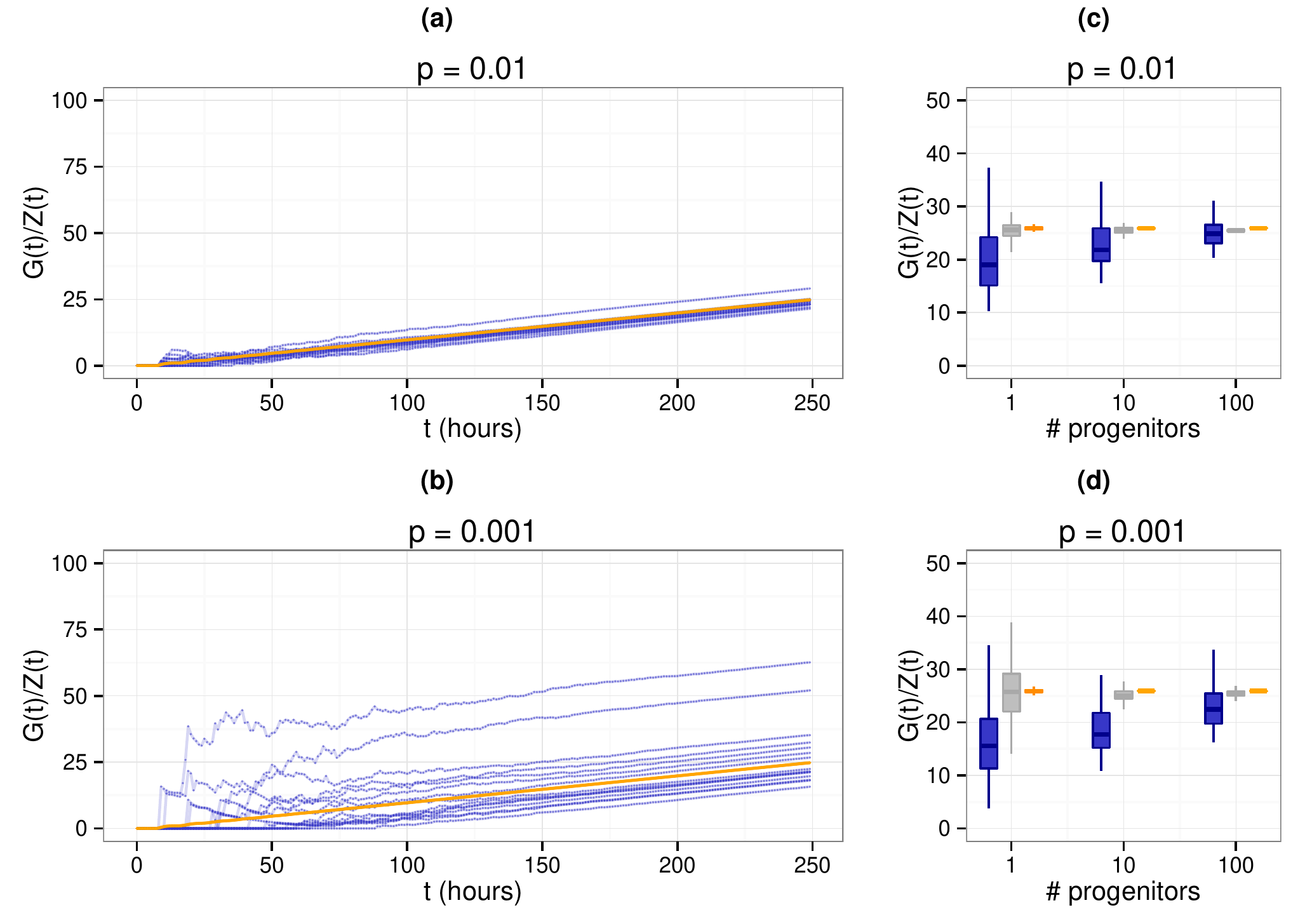}
\caption{Similar results to those shown in \Fig{fig:multiple_progenitors},
but using a cell life-time distributed according to a delayed
Gamma distribution with shape parameter of $3$, scale parameter of
$1$ and a delay of 7 hours.
}
\label{fig:mult_progenitor_gam}
\end{figure*}

\begin{figure}
\centering\includegraphics[scale=0.7]{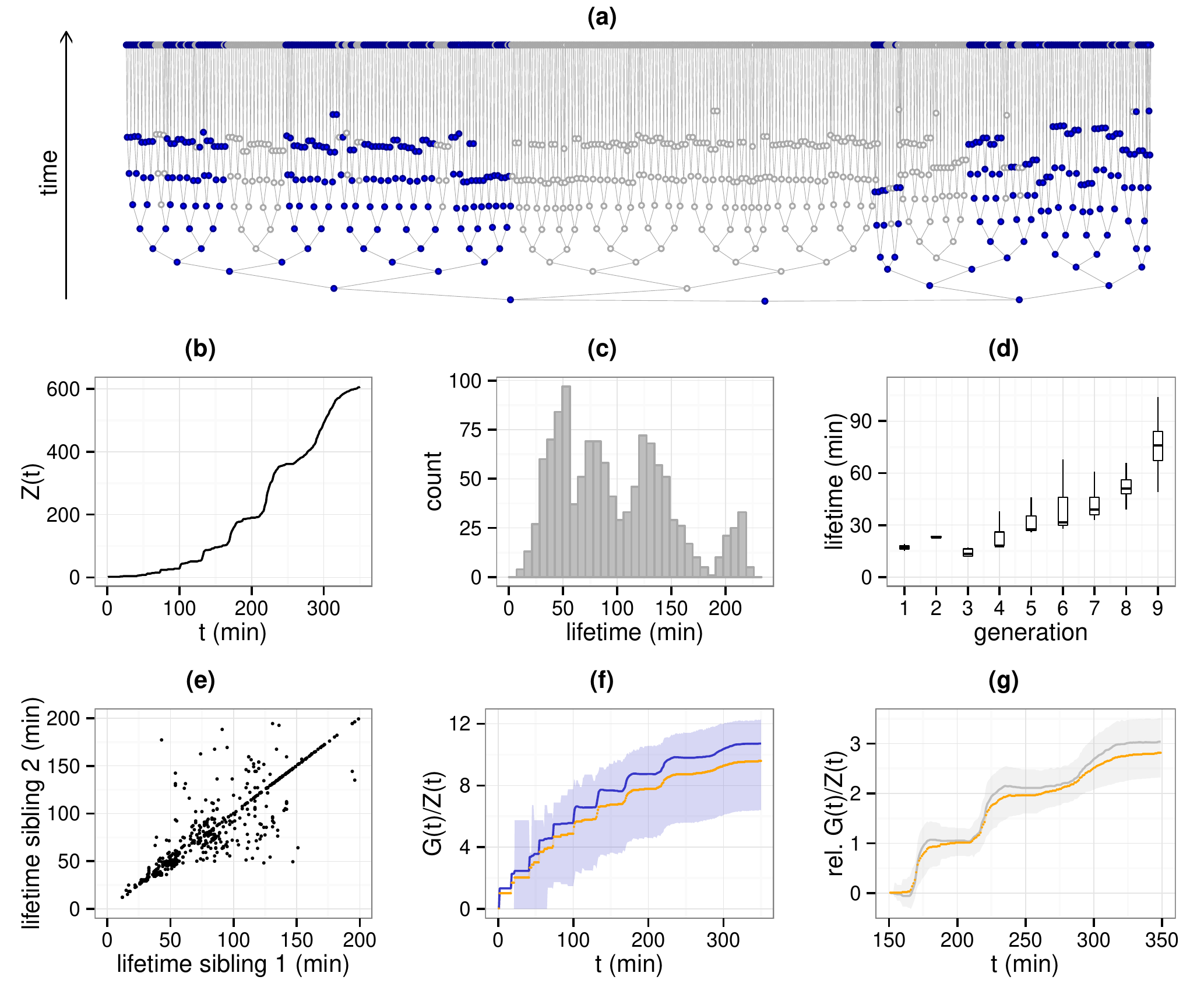}
\caption{
{\bf Validation with data from time-lapse microscopy}.
C. elegans embryonic development tree taken
from three-dimensional time-lapse movies of embryos recorded over
approximately six hours, supplementary material of \cite{richards2013},
and decorated with a stochastic labeling with $p=0.05$.
(a)
Lineage tree of C. elegans embryos, for first nine
generations. The lowest node represents the progenitor present in
the first frame, while the uppermost nodes represent cells alive
in the last frame. Each node in between represents a division event,
and its vertical position corresponds to the frame in which
it was recorded. The color of the nodes, blue for label-positive
at division and white for label-negative at division, illustrate
one possible random delabeling of the trees. 
(b) Total population, $Z(t)$, as a function of time, $t$, shows a
step-wise increase in cell numbers followed by a reduction in growth
rate that is not consistent with a homogeneous age dependent branching
process.
(c) Histogram of the cell lifetime shows a multi-modal distribution,
partly explained by the fact that lifetimes appear to lengthen with
time of birth or generation.
(d) Box plot of the lifetimes as a function of generation, 
which shows lifetimes increasing with generation.
(e) Sibling lifetimes are highly correlated, with many dividing
during the same frame.
(f) Comparison of average generation $G(t)/Z(t)$ (orange)
with its estimate $-\log\left(\Zp(t)/Z(t)\right)/p$ (blue).  Monte
Carlo determined inter-quartile ranges based on $10^4$ samples. 
(g) Same as (f), but starting at $t_1=150$ and computing the
difference of the average generation $G(t)/Z(t)$ (orange) 
since $t_1$ via the approximation \eqref{eq:two_tps}.  }
\label{fig:celegans}
\end{figure}

\begin{figure}
\centering\includegraphics[scale=0.6]{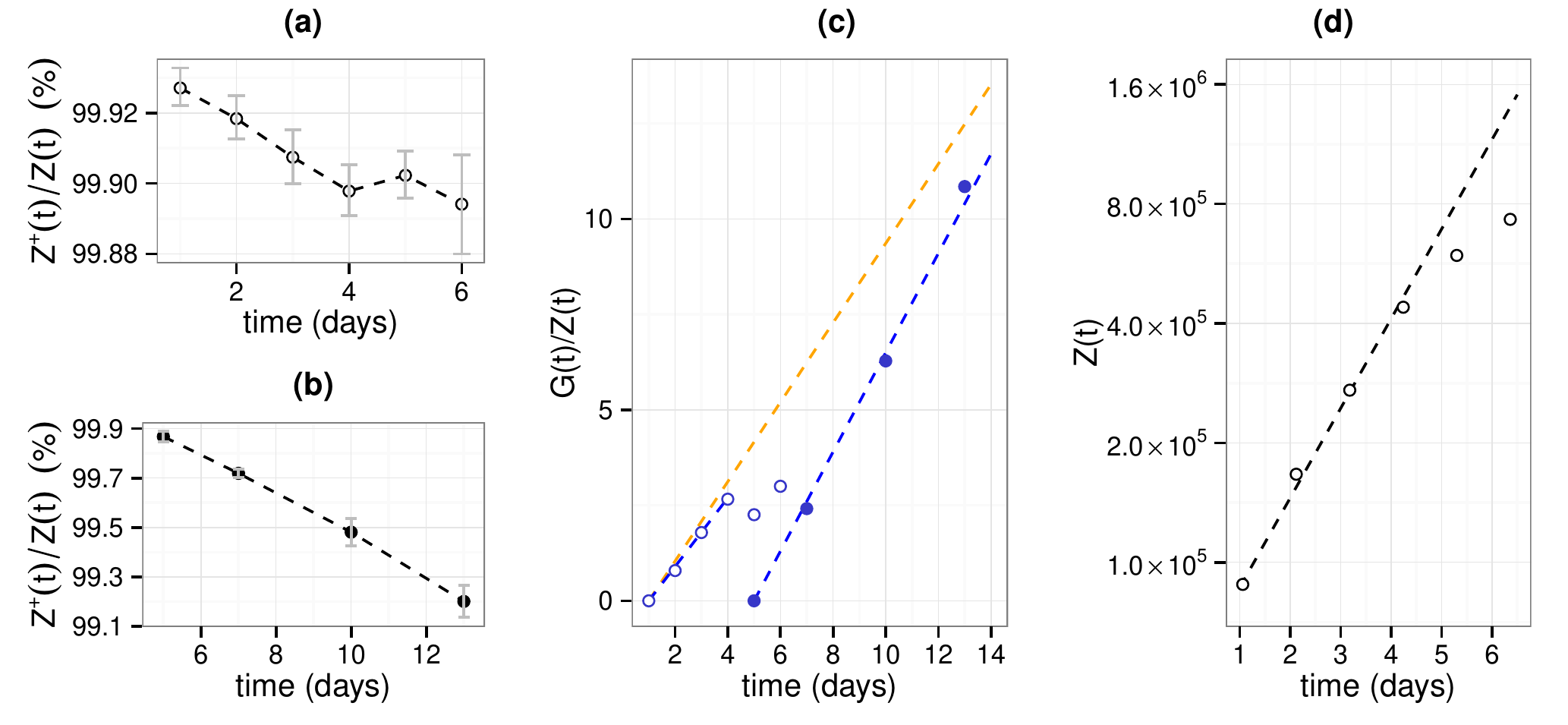}
\caption{
{\bf Validation with data from micro-satellite reporter systems.}
(a-b) Percentage of unlabeled cells $\Zp(t)/Z(t)$ over several
days of culture for genetically modified clones of MEF cells (a, \cite{Kozar2013})
and HCT116 cells (b, \cite{Gasche2003}) carrying micro-satellite mutation
reporter cassettes. The mutation probabilities, as reported in the
respective studies are $1.1\times 10^{-4}$ \cite{Kozar2013} and $6.1\times 10^{-4}$
\cite{Gasche2003}, which we use for $p$, the probability of delabeling
per cell division. Data are represented as mean $\pm$ SEM. For the
HCT116 data, the mean at each time point is over three clones. (c)
Inferred relative average generation G(t)/Z(t) from the proportions
of unlabeled cells and $p$ using the estimator $1/p\log(\Zp(t)/Z(t))$
(MEF, open circles; HCT116, filled circles). For both data sets the
average generation increases linearly with time, as expected from
the theory. In addition, the slope is close to one generation per
day, even thought the mutation probability $p$ differs by a factor
of six. The orange dashed line shows the increase in average generation
as estimated directly from the growth curve $Z(t)$ (panel d) assuming exponentially
distributed life-times and no cell death (for details see main text),
demonstrating the accuracy of the estimator. The blue dashed regression
lines ($r^2=0.99$ for both sets) highlight the linear increase of average generation estimates during the exponential growth phase. (d) Population size $Z(t)$ of cultured MEF cells (open circles, \cite{Kozar2013}) and exponential growth fit to first four time points (dashed line, $r^2=0.97$).
\label{fig:MS}} \end{figure} 

\begin{figure}
\centering\includegraphics[scale=0.7]{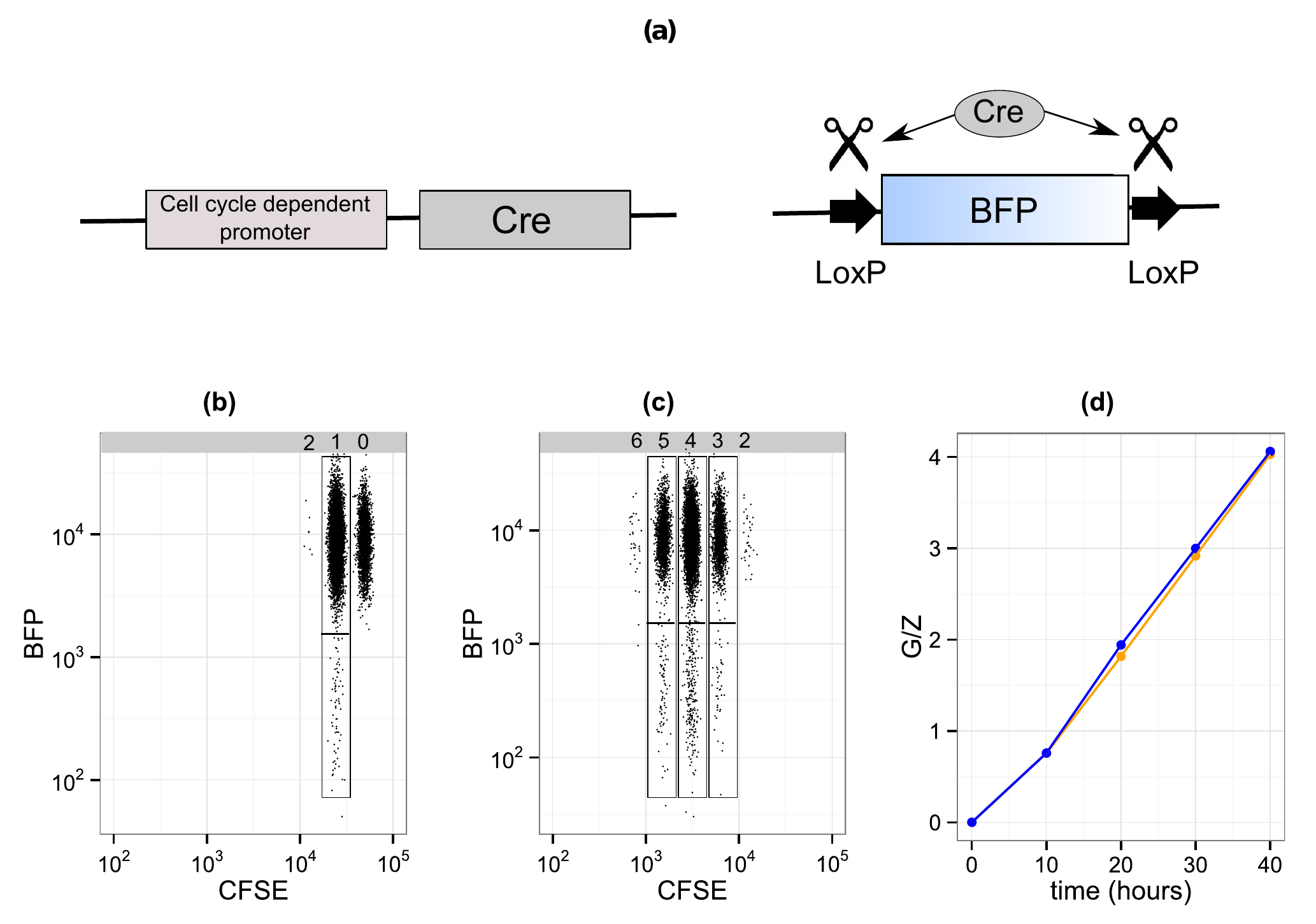}
\caption{
{\bf Proposed probabilistic division-linked label-loss construct.}
(a) 
Schematic drawing of the construct. Cells initially express a
protein such as Blue Fluorescent Protein (BFP). During each cell
cycle Cre is expressed and with a small probability the gene for
BFP is excised, leading to irreversible label-loss. (b)-(d) In
silico simulation of proposed FACS experiment to determine the
probability of label-loss, $p$. The simulation is parameterized as
in Section \ref{sec:val_sim}. $20,000$ BFP+ cells are labeled
with a division diluting dye such as CFSE, which is green, and
cultured until they divide. At each division, CFSE is shared evenly
between daughters and the BFP label is lost with probability $p=0.01$.
(b)-(c) 
Simulated FACS plots, ten and 40 hours after CFSE labeling.
The fluorescence signals were set to realistic experimental values
\cite{hawkins07} and noise was added to initial CFSE levels for
illustrative purposes.  The gates in (b) and (c) capture the
proportion of label-positive cells, $\BFPp(n)$, in each generation,
$n$. From these, the label-loss probability, $p$, can be experimentally
determined via the formula $p=1-\BFPp(n+1)/\BFPp(n)$. As all cells
in the zeroth generation express BFP, $\BFPp(0) = 1$, and so
$p=1-\BFPp(1)=\BFPn(1)=0.0101$, the proportion of BFP negative cells in
generation $1$.
(d) 
The orange line plots the average generation of the population,
determined from the simulated CFSE data, a function of time. The
blue line is estimate determined using the method described in the
present paper \eqref{eq:approx} with a label-loss parameter,
$p=0.0101$, obtained from (b). The CFSE determined value
and the estimate exhibit excellent concordance.
}
\label{fig:construct}
\end{figure}


\begin{thebibliography}{10}

\bibitem{allsopp1992}
R.~C Allsopp, H.~Vaziri, C.~Patterson, S.~Goldstein, E.~V. Younglai, A.~B.
  Futcher, C.~W. Greider, and C.~B. Harley.
\newblock Telomere length predicts replicative capacity of human fibroblasts.
\newblock {\em Proc. Natl. Acad. Sci. U.S.A}, 89(21):10114--10118, 1992.

\bibitem{Asmussen98}
S. Asmussen.
\newblock A probabilistic look at the {W}iener-{H}opf equation.
\newblock {\em SIAM Rev.}, 40(2):189--201, 1998.

\bibitem{Athreya76}
K.~B. Athreya and N.~Kaplan.
\newblock Convergence of the age distribution in the one-dimensional
  supercritical age-dependent branching process.
\newblock {\em Ann. Probability}, 4(1):38--50, 1976.

\bibitem{Bai2010Nucleosomedepleted}
L.~Bai, G.~Charvin, E.~D. Siggia, and F.~R. Cross.
\newblock {Nucleosome-depleted regions in cell-cycle-regulated promoters ensure
  reliable gene expression in every cell cycle.}
\newblock {\em Dev. Cell}, 18(4):544--555, April 2010.

\bibitem{Bellman52}
R.~Bellman and T.~Harris.
\newblock On age-dependent binary branching processes.
\newblock {\em Ann. of Math. (2)}, 55:280--295, 1952.

\bibitem{buchholz2013}
V.~R. Buchholz, M.~Flossdorf, I.~Hensel, L.~Kretschmer, B.~Weissbrich,
  P.~Gr{\"a}f, A.~Verschoor, M.~Schiemann, T.~H{\"o}fer, and D.~H. Busch.
\newblock Disparate individual fates compose robust {CD8+ T cell} immunity.
\newblock {\em Science}, 340(6132):630--635, 2013.

\bibitem{carlson2012}
A.~Carlson, C. A .and~Kas, R.~Kirkwood, L.~E. Hays, B.~D. Preston, S.~J.
  Salipante, and M.~S. Horwitz.
\newblock Decoding cell lineage from acquired mutations using arbitrary deep
  sequencing.
\newblock {\em Nat. methods}, 9(1):78--80, 2012.

\bibitem{crump69}
K.~S. Crump and C.~J. Mode.
\newblock An age-dependent branching process with correlations among sister
  cells.
\newblock {\em J. Appl. Probability}, 6(1):205--210, 1969.

\bibitem{deboer2013}
R.~J. De~Boer and A.~S. Perelson.
\newblock Quantifying {T} lymphocyte turnover.
\newblock {\em J. Theor. Bio.}, 327:45--87, 2013.

\bibitem{dowling2014}
M.~R. Dowling, A.~Kan, S.~Heinzel, J.~H.~S. Zhou, J.~M. Marchingo, C.~J.
  Wellard, J.~F. Markham, and P.~D. Hodgkin.
\newblock Stretched cell cycle model for proliferating lymphocytes.
\newblock {\em Proc. Natl. Acad. Sci. U.S.A}, 111(17):6377--–6382, 2014.

\bibitem{Drake1998}
J.~W. Drake, B.~Charlesworth, D.~Charlesworth, and J.~F. Crow.
\newblock Rates of spontaneous mutation.
\newblock {\em Genetics}, 148(4):1667--1686, 1998.

\bibitem{duffy09}
K.~R. Duffy and V.~G. Subramanian.
\newblock On the impact of correlation between collaterally consanguineous
  cells on lymphocyte population dynamics.
\newblock {\em J. Math. Biol.}, 59(2):255--285, 2009.

\bibitem{Duffy12}
K.~R. Duffy, C.~J. Wellard, J.~F. Markham, J.~H.~S. Zhou, R.~Holmberg, E.~D.
  Hawkins, J.~Hasbold, M.~R. Dowling, and P.~D. Hodgkin.
\newblock Activation-induced {B} cell fates are selected by intracellular
  stochastic competition.
\newblock {\em Science}, 335(6066):338--341, 2012.

\bibitem{feller68}
W.~Feller.
\newblock {\em An introduction to probability theory and its applications.
  {V}ol. {I}}.
\newblock John Wiley \& Sons Inc., 1968.

\bibitem{Frank2003Patterns}
S.~A. Frank, Y.~Iwasa, and M.~A. Nowak.
\newblock Patterns of cell division and the risk of cancer.
\newblock {\em Genetics}, 163(4):1527--1532, 2003.

\bibitem{Gasche2003}
C.~Gasche, C.~L. Chang, L.~Natarajan, A.~Goel, J.~Rhees, D.~J. Young, C.~N.
  Arnold, and C.~R. Boland.
\newblock Identification of frame-shift intermediate mutant cells.
\newblock {\em Proc. Natl. Acad. Sci. U.S.A.}, 100(4):1914--1919, 2003.

\bibitem{gerlach2013}
C.~Gerlach, J.~C. Rohr, L.~Peri{\'e}, N.~van Rooij, J.~W.~J. van Heijst,
  A.~Velds, J.~Urbanus, S.~H. Naik, H.~Jacobs, J.~B. Beltman, R.~de~Boer, and
  T.~Schumacher.
\newblock Heterogeneous differentiation patterns of individual {CD8+ T} cells.
\newblock {\em Science}, 340(6132):635--639, 2013.

\bibitem{Giurumescu2012Quantitative}
C.~A. Giurumescu, S.~Kang, T.~A. Planchon, E.~Betzig, J.~Bloomekatz, D.~Yelon,
  P.~Cosman, and A.~D. Chisholm.
\newblock {Quantitative semi-automated analysis of morphogenesis with
  single-cell resolution in complex embryos}.
\newblock {\em Development}, 139(22):4271--4279, 2012.

\bibitem{golden1995}
J.~A. Golden, S.~C. Fields-Berry, and C.~L. Cepko.
\newblock Construction and characterization of a highly complex retroviral
  library for lineage analysis.
\newblock {\em Proc. Natl. Acad. Sci. U.S.A}, 92(12):5704--5708, 1995.

\bibitem{Gomes11}
F.~L. Gomes, G.~Zhang, F.~Carbonell, J.~A. Correa, W.~A. Harris, B.~D. Simons,
  and M.~Cayouette.
\newblock Reconstruction of rat retinal progenitor cell lineages in vitro
  reveals a surprising degree of stochasticity in cell fate decisions.
\newblock {\em Development}, 138(2):227--235, 2011.

\bibitem{harley1990}
C.~B. Harley, A.~B. Futcher, and C.~W. Greider.
\newblock Telomeres shorten during ageing of human fibroblasts.
\newblock {\em Nat. Genet.}, 345(6274):458--460, 1990.

\bibitem{Harris63}
T.~E. Harris.
\newblock {\em The theory of branching processes}.
\newblock Springer-Verlag, Berlin, 1963.

\bibitem{hawkins07}
E.~D. Hawkins, M.~Hommel, M.~L. Turner, F.~L. Battye, J.~F Markham, and P.~D.
  Hodgkin.
\newblock Measuring lymphocyte proliferation, survival and differentiation
  using {CFSE} time-series data.
\newblock {\em Nat. Protoc.}, 2(9):2057--2067, 2007.

\bibitem{Hawkins09}
E.~D. Hawkins, J.~F. Markham, L.~P. McGuinness, and P.~D. Hodgkin.
\newblock A single-cell pedigree analysis of alternative stochastic lymphocyte
  fates.
\newblock {\em Proc. Natl. Acad. Sci. U.S.A}, 106(32):13457--13462, 2009.

\bibitem{hills2009}
M.~Hills, K.~L\"{u}cke, E.~A. Chavez, C.~J. Eaves, and P.~M. Lansdorp.
\newblock Probing the mitotic history and developmental stage of hematopoietic
  cells using single telomere length analysis {(STELA)}.
\newblock {\em Blood}, 113(23):5765--5775, 2009.

\bibitem{Jagers69}
P.~Jagers.
\newblock The proportions of individuals of different kinds in two-type
  populations. {A} branching process problem arising in biology.
\newblock {\em J. Appl. Probability}, 6:249--260, 1969.

\bibitem{jagers1969}
P.~Jagers.
\newblock Renewal theory and the almost sure convergence of branching
  processes.
\newblock {\em Arkiv f{\"o}r Matematik}, 7(6):495--504, 1969.

\bibitem{Kaszubowska2008Telomere}
L.~Kaszubowska.
\newblock Telomere shortening and ageing of the immune system.
\newblock {\em J. Physiol. Pharmacol.}, 59:169--186, 2008.

\bibitem{Kimmel02}
M.~Kimmel and D.~E. Axelrod.
\newblock {\em Branching Processes in Biology}.
\newblock Springer, 2002.

\bibitem{Koole2013Versatile}
W.~Koole, H.~S. Sch\"{a}fer, R.~Agami, G.~van Haaften, and M.~Tijsterman.
\newblock {A versatile microsatellite instability reporter system in human
  cells}.
\newblock {\em Nucleic Acids Res.}, 41(16):e158, 2013.

\bibitem{Koole2014}
W.~Koole and M.~Tijsterman.
\newblock {Mosaic analysis and tumor induction in zebrafish by microsatellite
  instability-mediated stochastic gene expression}.
\newblock {\em Dis. Model Mech.}, 7(7):929--936, July 2014.

\bibitem{Kozar2013}
S.~Kozar, E.~Morrissey, A.~M. Nicholson, M.~van~der Heijden, H.~I. Zecchini,
  R.~Kemp, S.~Tavar\'{e}, L.~Vermeulen, and D.~J. Winton.
\newblock {Continuous clonal labeling reveals small numbers of functional stem
  cells in intestinal crypts and adenomas.}
\newblock {\em Cell Stem Cell}, 13(5):626--633, November 2013.

\bibitem{Krantz02a}
S.~G. Krantz and H.~R. Parks.
\newblock {\em A primer of real analytic functions}.
\newblock Birkh\"auser Boston, 2002.

\bibitem{kreso2013}
A.~Kreso, C.~A. O'Brien, P.~van Galen, O.~I. Gan, F.~Notta, A.~M.~K. Brown,
  K.~Ng, J.~Ma, E.~Wienholds, C.~Dunant, A.~Pollett, S.~Gallinger,
  J.~McPherson, C.~G. Mullighan, D.~Shibata, and J.~E. Dick.
\newblock Variable clonal repopulation dynamics influence chemotherapy response
  in colorectal cancer.
\newblock {\em Science}, 339(6119):543--548, 2013.

\bibitem{Levinson60}
N.~Levinson.
\newblock Limiting theorems for age-dependent branching processes.
\newblock {\em Illinois J. Math.}, 4:100--118, 1960.

\bibitem{Liu2011Mosaic}
C.~Liu, J.~C. Sage, M.~R. Miller, R.~G. Verhaak, S.~Hippenmeyer, H.~Vogel,
  O.~Foreman, R.~T. Bronson, A.~Nishiyama, L.~Luo, and H.~Zong.
\newblock {Mosaic analysis with double markers reveals tumor cell of origin in
  glioma.}
\newblock {\em Cell}, 146(2):209--221, July 2011.

\bibitem{lu2011}
R.~Lu, N.~F. Neff, S.~R. Quake, and I.~L. Weissman.
\newblock Tracking single hematopoietic stem cells in vivo using
  high-throughput sequencing in conjunction with viral genetic barcoding.
\newblock {\em Nat. Biotechnol.}, 29(10):928--933, 2011.

\bibitem{lyons00}
A.~B. Lyons.
\newblock Analysing cell division in vivo and in vitro using flow cytometric
  measurement of {CFSE} dye dilution.
\newblock {\em J. Immunol. Methods}, 243(1):147--154, 2000.

\bibitem{lyons94}
A.~B. Lyons and C.~R. Parish.
\newblock Determination of lymphocyte division by flow cytometry.
\newblock {\em J. Immunol. Methods}, 171(1):131--137, 1994.

\bibitem{marchingo2014}
J.~M. Marchingo, A.~Kan, R.~M. Sutherland, K.~R. Duffy, C.~J. Wellard, G.~T.
  Belz, A.~M. Lew, M.~R. Dowling, S.~Heinzel, and P.~D. Hodgkin.
\newblock Antigen affinity, costimulation, and cytokine inputs sum linearly to
  amplify {T} cell expansion.
\newblock {\em Science}, 346(6213):1123--1127, 2014.

\bibitem{markham10}
J.~F. Markham, C.~J. Wellard, E.~D. Hawkins, K.~R. Duffy, and P.~D. Hodgkin.
\newblock A minimum of two distinct heritable factors are required to explain
  correlation structures in proliferating lymphocytes.
\newblock {\em J. R. Soc. Interface}, 7(48):1049--1059, 2010.

\bibitem{Merlo2006Cancer}
L.~M.~F. Merlo, J.~W. Pepper, B.~J. Reid, and C.~C. Maley.
\newblock Cancer as an evolutionary and ecological process.
\newblock {\em Nat. Rev. Cancer}, 6(12):924--935, 2006.

\bibitem{Mode66}
C.~J. Mode.
\newblock Restricted transition probabilities and their applications to some
  problems in the dynamics of biological populations.
\newblock {\em Bull. Math. Biophys.}, 28:315--331, 1966.

\bibitem{Nagy2000Cre}
A.~Nagy.
\newblock {Cre recombinase: the universal reagent for genome tailoring.}
\newblock {\em Genesis}, 26:99--109, 2000.

\bibitem{naik13}
S.~Naik, L.~Peri\'e, E.~Swart, C.~Gerlach, N.~van Rooi, R.~de~Boer, and
  T.~Schumacher.
\newblock Diverse and heritable lineage imprinting of early haematopoietic
  progenitors.
\newblock {\em Nature}, 496:229--233, 2013.

\bibitem{Olofsson96}
P.~Olofsson.
\newblock Branching processes with local dependencies.
\newblock {\em Ann. Appl. Probab.}, 6(1):238--268, 1996.

\bibitem{perie14}
L.~Peri{\'e}, P.~D. Hodgkin, S.~H. Naik, T.~N. Schumacher, R.~J. de~Boer, and
  K.~R. Duffy.
\newblock Determining lineage pathways from cellular barcoding experiments.
\newblock {\em Cell Rep.}, 6(4):617--624, 2014.

\bibitem{Powell55}
E.~O. Powell.
\newblock Some features of the generation times of individual bacteria.
\newblock {\em Biometrika}, 42:16--44, 1955.

\bibitem{quah12}
B.~J.~C. Quah and C.~R. Parish.
\newblock New and improved methods for measuring lymphocyte proliferation in
  vitro and in vivo using {CFSE-like} fluorescent dyes.
\newblock {\em J. Immunol. Methods}, 379(1):1--14, 2012.

\bibitem{reizel11}
Y.~Reizel, N.~Chapal-Ilani, R.~Adar, S.~Itzkovitz, J.~Elbaz, Y.~E. Maruvka,
  E.~Segev, L.~I. Shlush, N.~Dekel, and E.~Shapiro.
\newblock Colon stem cell and crypt dynamics exposed by cell lineage
  reconstruction.
\newblock {\em PLoS genetics}, 7(7):e1002192, 2011.

\bibitem{Resnick1992}
S.~I. Resnick.
\newblock {\em Adventures in Stochastic Processes}.
\newblock Birkh{\"a}user Boston, 1992.

\bibitem{richards2013}
J.~L. Richards, A.~L. Zacharias, T.~Walton, J.~T. Burdick, and J.~I. Murray.
\newblock A quantitative model of normal caenorhabditis elegans embryogenesis
  and its disruption after stress.
\newblock {\em Dev. Biol.}, 374(1):12--23, 2013.

\bibitem{rufer1999}
N.~Rufer, T.~H. Br{\"u}mmendorf, S.~Kolvraa, C.~Bischoff, K.~Christensen,
  L.~Wadsworth, M.~Schulzer, and P.~M. Lansdorp.
\newblock Telomere fluorescence measurements in granulocytes and {T} lymphocyte
  subsets point to a high turnover of hematopoietic stem cells and memory {T}
  cells in early childhood.
\newblock {\em J. Exp. Med.}, 190(2):157--168, 1999.

\bibitem{sakaue2008}
A.~Sakaue-Sawano, H.~Kurokawa, T.~Morimura, A.~Hanyu, H.~Hama, H.~Osawa,
  S.~Kashiwagi, K.~Fukami, T.~Miyata, H.~Miyoshi, T.~Imamura, M.~Ogawa,
  H.~Masai, and A.~Miyawaki.
\newblock Visualizing spatiotemporal dynamics of multicellular cell-cycle
  progression.
\newblock {\em Cell}, 132(3):487--498, 2008.

\bibitem{Samuels71}
M.~L. Samuels.
\newblock Distribution of the branching-process population among generations.
\newblock {\em J. Appl. Probability}, 8:655--667, 1971.

\bibitem{Serfling80}
R.~J. Serfling.
\newblock {\em Approximation Theorems of Mathematical Statistics}.
\newblock Wiley, New York, 1980.

\bibitem{shibata96}
D.~Shibata, W.~Navidi, R.~Salovaara, Z.-H. Li, and L.~A. Aaltonen.
\newblock Somatic microsatellite mutations as molecular tumor clocks.
\newblock {\em Nat. Med.}, 2(6):676--681, 1996.

\bibitem{shibata06}
D.~Shibata and S.~Tavar{\'e}.
\newblock Counting divisions in a human somatic cell tree.
\newblock {\em Cell Cycle}, 5(6):610--614, 2006.

\bibitem{Slebos2002}
R.~J.~C. Slebos, D.~S. Oh, D.~M. Umbach, and J.~A. Taylor.
\newblock {Mutations in Tetranucleotide Repeats following DNA Damage Depend on
  Repeat Sequence and Carcinogenic Agent}.
\newblock {\em Cancer Res.}, 62(21):6052--6060, November 2002.

\bibitem{Smith73}
J.~A. Smith and L.~Martin.
\newblock Do cells cycle?
\newblock {\em Proc. Natl. Acad. Sci. U.S.A}, 70(4):1263--1267, 1973.

\bibitem{Snippert10}
H.~J. Snippert, L.~G. van~der Flier, T.~Sato, J.~H. van Es, M.~van~den Born,
  C.~Kroon-Veenboer, N.~Barker, A.M. Klein, J.~van Rheenen, B.~D. Simons, and
  H.~Clevers.
\newblock Intestinal crypt homeostasis results from neutral competition between
  symmetrically dividing {Lgr5} stem cells.
\newblock {\em Cell}, 143(1):134--144, 2010.

\bibitem{Stuart94}
A.~Stuart and J.~K. Ord.
\newblock {\em Kendall's advanced theory of statistics. {V}ol. 1}.
\newblock Edward Arnold, London, 1994.

\bibitem{Sulston83}
J.~E. Sulston, E.~Schierenberg, J.~G. White, and J.~N. Thomson.
\newblock The embryonic cell lineage of the nematode {{\it Caenorhabditis
  elegans}}.
\newblock {\em Dev. Biol.}, 100(1):64--119, 1983.

\bibitem{Sun2012Direct}
J.~X. Sun, A.~Helgason, G.~Masson, S.~S. Ebenesersdottir, H.~Li, S.~Mallick,
  S.~Gnerre, N.~Patterson, A.~Kong, D.~Reich, and K.~Stefansson.
\newblock {A direct characterization of human mutation based on
  microsatellites}.
\newblock {\em Nat. Genet.}, 44(10):1161--1165, 2012.

\bibitem{Sun2008Mitotic}
L.~Sun, X.~Wu, M.~Han, T.~Xu, and Y.~Zhuang.
\newblock {A mitotic recombination system for mouse chromosome 17}.
\newblock {\em Proc. Natl. Acad. Sci. U.S.A}, 105(11):4237--4241, 2008.

\bibitem{tomasetti15}
C.~Tomasetti and B.~Vogelstein.
\newblock Variation in cancer risk among tissues can be explained by the number
  of stem cell divisions.
\newblock {\em Science}, 347(6217):78--81, 2015.

\bibitem{tsao00}
J.-L. Tsao, Y.~Yatabe, R.~Salovaara, H.~J. J{\"a}rvinen, J.-P. Mecklin, L.~A.
  Aaltonen, S.~Tavar{\'e}, and D.~Shibata.
\newblock Genetic reconstruction of individual colorectal tumor histories.
\newblock {\em Proc. Natl. Acad. Sci. U.S.A}, 97(3):1236--1241, 2000.

\bibitem{vaziri1994}
H.~Vaziri, W.~Dragowska, R.~C. Allsopp, T.~E. Thomas, C.~B. Harley, and P.~M.
  Lansdorp.
\newblock Evidence for a mitotic clock in human hematopoietic stem cells: loss
  of telomeric {DNA} with age.
\newblock {\em Proc. Natl. Acad. Sci. U.S.A}, 91(21):9857--9860, 1994.

\bibitem{Vaziri1993Loss}
H.~Vaziri, F.~Sch\"{a}chter, I.~Uchida, L.~Wei, X.~Zhu, R.~Effros, D.~Cohen,
  and C.~B. Harley.
\newblock Loss of telomeric {DNA} during aging of normal and trisomy 21 human
  lymphocytes.
\newblock {\em Am. J. Hum. Genet.}, 52(4):661--667, 1993.

\bibitem{Wasserstrom08}
A.~Wasserstrom, D.~Frumkin, R.~Adar, S.~Itzkovitz, T.~Stern, S.~Kaplan,
  G.~Shefer, I.~Shur, L.~Zangi, Y.~Reizel, A.~Harmelin, Y.~Dor, N.~Dekel,
  Y.~Reisner, D.~Benayahu, E.~Tzahor, E.~Segal, and E.~Y. Shapiro.
\newblock Estimating cell depth from somatic mutations.
\newblock {\em PLoS Comput. Biol.}, 4(5), 2008.

\bibitem{Weber2014}
T.~S. Weber, I.~Jaehnert, C.~Schichor, M.~Or-Guil, and J.~Carneiro.
\newblock {Quantifying the length and variance of the eukaryotic cell cycle
  phases by a stochastic model and dual nucleoside pulse labelling.}
\newblock {\em Plos Comput Biol}, 10(7), July 2014.

\bibitem{weinrich1997}
S.~L. Weinrich, R.~Pruzan, L.~Ma, M.~Ouellette, V.~M. Tesmer, S.~E. Holt, A.~G.
  Bodnar, S.~Lichtsteiner, N.~W. Kim, J.~B. Trager, R.~D. Taylor, R.~Carlos,
  W.~H. Andrews, W.~E. Wright, J.~W. Shay, C.~B. Harley, and G.~B. Morin.
\newblock Reconstitution of human telomerase with the template {RNA} component
  {hTR} and the catalytic protein subunit {hTRT}.
\newblock {\em Nat. Genet.}, 17(4):498--502, 1997.

\bibitem{zhang2013}
B.~Zhang, M.~Dai, Q.-J. Li, and Y.~Zhuang.
\newblock Tracking proliferative history in lymphocyte development with
  cre-mediated sister chromatid recombination.
\newblock {\em PLoS Genet.}, 9(10):e1003887, 10 2013.

\end{thebibliography}
\end{document}